\def\draft{0} %Set to 0 for lipics version
\crefname{claim}{Claim}{Claims}
\newcommand{\vnote}[1]{\ifnum\draft=1\textcolor{orange}{[\textbf{Santhoshini:} #1]}\fi}
\newcommand{\mnote}[1]{\ifnum\draft=1\textcolor{red}{[\textbf{Madhu:} #1]}\fi}
\newcommand{\snote}[1]{\ifnum\draft=1\textcolor{red}{[\textbf{Noah:} #1]}\fi}
\newcommand{\sanote}[1]{\ifnum\draft=1\textcolor{red}{[\textbf{Sasha:} #1]}\fi}
\newcommand{\rnote}[1]{\ifnum\draft=1\textcolor{red}{[\textbf{Raghuvansh:} #1]}\fi}
\newcommand{\cnote}[1]{\ifnum\draft=1\textcolor{red}{[\textbf{Chi-Ning:} #1]}\fi}
\newcommand{\pnote}[1]{\ifnum\draft=1\textcolor{red}{[\textbf{Prashanth:} #1]}\fi}
\newcounter{algsubstate}
\algnewcommand\algorithmicinput{\textbf{Input:}}
\algnewcommand\Input{\item[\algorithmicinput]}
\algnewcommand\algorithmicoutput{\textbf{Output:}}
\algnewcommand\Output{\item[\algorithmicoutput]}
\algnewcommand\algorithmicgoal{\textbf{Goal:}}
\algnewcommand\Goal{\item[\algorithmicgoal]}
\newcommand{\N}{\mathbb{N}}
\newcommand{\F}{\mathbb{F}}
\newcommand{\ch}{\textsc{Ch}}
\newcommand{\Ch}{\textsc{Ch}}
\numberwithin{equation}{section}
\declaretheoremstyle[bodyfont=\it,qed=\qedsymbol]{noproofstyle}
\declaretheorem[name=Observation,numbered=no]{observation*}
\declaretheorem[numberlike=equation]{theorem}
\declaretheorem[name=Theorem,numbered=no]{theorem*}
\declaretheorem[numberlike=equation]{lemma}
\declaretheorem[name=Lemma,numbered=no]{lemma*}
\declaretheorem[name=Corollary,numbered=no]{corollary*}
\declaretheorem[numberlike=equation]{proposition}
\declaretheorem[name=Proposition,numbered=no]{proposition*}
\declaretheorem[name=Claim,numbered=no]{claim*}
\declaretheorem[name=Conjecture,numbered=no]{conjecture*}
\declaretheorem[name=Question,numbered=no]{question*}
\declaretheoremstyle[bodyfont=\it]{defstyle} 
\declaretheorem[numberlike=equation,style=defstyle]{definition}
\declaretheorem[unnumbered,name=Definition,style=defstyle]{definition*}
\declaretheorem[unnumbered,name=Example,style=defstyle]{example*}
\declaretheorem[unnumbered,name=Notation=defstyle]{notation*}
\declaretheorem[unnumbered,name=Construction,style=defstyle]{construction*}
\declaretheoremstyle[]{rmkstyle}
\title{Algebra in Algorithmic Coding Theory\thanks{This paper accompanies a lecture with the same title given by the author at the workshop titled {\em Forward From the Fields Medal}  at the Fields Institute, Toronto, Canada, August 12-17, 2024.}}
\author{Madhu Sudan\thanks{School of Engineering and Applied Sciences, Harvard University, Cambridge, Massachusetts, USA. Supported in part by a Simons Investigator Award, NSF Award CCF 2152413 and AFOSR award FA9550-25-1-0112. Email: \texttt{madhu@cs.harvard.edu}.}}
\date{December 5, 2025}
\begin{document}

\maketitle

\begin{abstract}
We survey the notion and history of error-correcting codes and the algorithms needed to make them effective in information transmission. We then give some basic as well as more modern constructions of, and algorithms for, error-correcting codes that depend on relatively simple elements of applied algebra. While the role of algebra in the constructions of codes has been widely acknowledged in texts and other writings, the role in the design of algorithms is often less widely understood, and this survey hopes to reduce this difference to some extent.
\end{abstract}

\tableofcontents \newpage

The challenge of error-correction emerged in the late 1930s when the possibility of ``digital'' communication and storage of information started to become a realistic possibility. It was formalized in seminal works of Shannon~\cite{Shannon} and Hamming~\cite{Hamming}.  Digital communication refers to communicating with symbols over a finite alphabet, as we do in most human languages, as opposed to communicating with a continuum of symbols, as one may argue that sound and music do. With digital communication ``perfect communication'', where the receiver is able to receive the sender's message perfectly without any errors, seems plausible, and if achieved, this could ensure that information can be preserved forever. However, no channel of communication is perfect --- errors are inevitable in any setting. And errors in digital communication can have disastrous effects. As an example if the sender were intending to send the message ``WE ARE NOT READY'' and the channel flips just one symbol, it could have the catastrophic effect of the receiver receiving the message ``WE ARE NOW READY''. Error-correction captures the body of work that studies how to ensure that errors of this type do not occur (or at least occur with negligible probability depending on the model of communication). 

Specifically, an error-correction scheme would first ``encode'' the message to be transmitted by adding redundancy and then transmit the encoded data. For example one such scheme might just repeat every symbol a few times. So the actual transmission in the example above may be ``WWWEEE AAARRREEE ...'' by repeating every letter three times. The channel of communication may now introduce errors, but one may hope the errors don't affect all repeated symbols in the same way and so perhaps the receiver may receive a sequence such as ``WWWEXE AAARRYEEE ...''. On seeing a received sequence, potentially with some errors like in the example above, the receiver would try to ``decode'' the sequence to try and retrieve the original message. In the above example the receiver may decide (before seeing the received word) to decode the received transmission by taking the majority symbol in each block of three letters. This would have led to correct decoding (of the first five letters of the message) in the specific example above, but how well does this decoder work in general? Are there better encoders and decoders? For example we could have repeated every character ten times instead of just thrice - what do we gain? And what do we lose? Are there options for encoders other  than just repetition? If so do they come with nice decoders such that both encoders and decoders not only correct many errors at a low ``redundancy'' price, but also can be efficiently computed? These are the questions that frame the rest of this survey. We will start by formalizing these questions in the ensuing section before turning to solutions. 

The one-word summary of the rest of the article is YES! Yes, we can find better encoders and decoders and in fact prove they are close to optimal according to the parameters laid out in the next section. These encoders and decoders will have efficient algorithms accompanying them (and in some cases the only proof that the scheme works well comes from the efficient algorithm)! And all these results are enabled by, mostly elementary, algebra over finite fields.

\section{Error-correction: Problem Definition and Key Parameters}

We start by highlighting some vocabulary that will be used in this article. Many of the terms originate from the work of Hamming~\cite{Hamming}. Some notions, such as the encoding and decoding functions were already mentioned in Shannon~\cite{Shannon}. One concept --- list-decoding --- is from the later work of Elias~\cite{Elias}.

We consider one-way transmission of information from a {\em sender} to a {\em receiver} over a {\em noisy channel}. The channel works over an {\em alphabet} $\Sigma$ which is just some known finite set. 
We will consider the most basic of channels, which outputs an element of $\Sigma$ every time it is asked to transmit one element of $\Sigma$.\footnote{In more general, and many practical settings one considers channels with different input and output alphabets --- we will not consider this more general setting in this paper.} If the channel output equals its input, the transmission is deemed error-free, and deemed an {\em error} otherwise. The quality of the channel is measured by the maximum fraction of symbols that have errors in a finite sequence of transmissions, denoted by $p \in [0,1]$.\footnote{We note that this corresponds to an adversarial source of errors as studied in Hamming's work~\cite{Hamming}, in contrast to a probabilistic source of errors as proposed in Shannon's work~\cite{Shannon} and studied more widely.} We often use $n$ to denote the length of the transmission, and also refer to it as the {\em blocklength}.

The architecture of an error-correcting scheme involves choosing a {\em message space}, and an encoding and decoding function. The {\em encoding function} maps a message to a {\em codeword} over $\Sigma$, where a {\em word} is a finite sequence of symbols of $\Sigma$ and codewords are words obtained by encoding some message. This codeword is transmitted over the channel which produces a {\em received word}, which may differ from the codeword in at most $p$-fraction of symbols. The receiver now applies the decoding function which maps the received word to a small {\em list} (i.e., sequence) of codewords. The transmission is considered successful if the output list contains the message. We note that the classical setting of decoding, which we refer to as {\em unique} decoding, required the list size to be $1$, but we will consider the more relaxed version in this article.

To give the above some set-theoretic formulation, we introduce the notion of Hamming distance. For words $x = (x_1,\ldots,x_n) \in \Sigma^n$ and $y = (y_1,\ldots,y_n) \in \Sigma^n$ we define their (normalized) Hamming distance to be the quantity $\delta(x,y) = \frac1n|\{i \in [n] | x_i \ne y_i\}|$. A $p$-bounded channel $\ch = \{\ch_n\}_{n \in \N}$ is a sequence of functions with $\ch_n:\Sigma^n\to\Sigma^n$ satisfying $\delta(x,\ch_n(x))\leq p$ for every $x \in \Sigma^n$. 

For simplicity (and ease of comparisons) we will assume the message is a word of length $k$ (for positive integer $k$) over the alphabet $\Sigma$ and so the message space is $\Sigma^k$. Thus the encoding is a function $E:\Sigma^k \to \Sigma^n$ and a list of $L$ decoder is a function $D:\Sigma^n \to (\Sigma^k)^L$. A coding scheme is given by an encoder/decoder pair $(E,D)$ and is considered a $(p,L)$-list decoder if for every $m\in \Sigma^k$ and every $p$-bounded channel $\ch$ we have $m \in D(\ch(E(m)))$. (We note that lists conflate the notion of sets and sequences - they are simply meant to be sets, but the decoder outputs the elements of the set in some sequence and this sequence is referred to as a list.) When $L = 1$, the $(p,L)$-list decoder is also called a $p$-error-correction scheme. 

Three key parameters of a coding scheme have already been introduced above: the alphabet size $q = |\Sigma|$, its error-correction bound, i.e., the parameter $p$ above, and the list size $L$. The remaining parameter is the (information) {\em rate} of the coding scheme denoted $R$ which is the ratio $k/n$. 
We summarize the discussion above with the following definition.

\begin{definition}[Coding Scheme]
For positive integers $q,n,L$ and real $R,p \in [0,1]$ an {\em $(R,p,q,L)$-coding scheme of blocklength $n$} is given by a pair $(E,D)$ of functions with $E:\Sigma^k \to \Sigma^n$ and $D:\Sigma^n \to (\Sigma^k)^L$, for some positive integer $k \geq Rn$ and set $\Sigma$ with $|\Sigma|=q$ such that for every $p$-bounded channel $\Ch$ and for every $m \in \Sigma^k$ we have $m \in D(\Ch(E(x)))$. 
\end{definition}

The broad goal is to study all $5$ parameters jointly, but this can be too much, and so we try to reduce to just two as follows. First we only consider families of $(R,p,q,L)$ coding schemes that can achieve arbitrarily large block lengths. Next we make similar choices for $q$ and $L$, i.e., we allow them to also tend to infinity; however, we have to be careful. First we stress that these are not the only settings of interest. Indeed the most common parameters to study are $q=2$ and $L=1$, but that is not the range of parameters that highlight the role of algebra (though the results do imply something in those settings as well and we may comment on this aspect as side notes below). Also when $q$, $L$ and $n$ are all tending to infinity, one needs to be careful about how they grow relative to each other. For instance allowing $L=q^k$ would make coding schemes trivial (as in easy to achieve) and pointless (as in the output list need not give any information about the message being sent). A reasonable choice for the growth of $q$ and $L$ is that each grows as some fixed polynomial in $n$. With these choices the question of attention ends up asking:
\begin{quote}
    For what values of $p$ and $L$ are there polynomials $q(\cdot)$ and $L(\cdot)$ such that for infinitely many blocklengths $n$ there are $(R,p,q(n),L(n))$-coding schemes of blocklength $n$? And how does the answer change if we restrict to polynomial time computable encoding and decoding functions $(E,D)$?
\end{quote}
To spoil the reader's fun, we give the extremely simple answer to this question: 
\begin{quote}
    Such schemes exist if and only if  $R \leq 1-p$, and this remains true even with the restriction that the encoding and decoding functions are polynomial time computable.
\end{quote}

We will get to this result later in this article, but before we do so, we mention one more crucial concept in the study of error-correction, namely the error-correcting code. This object is obtained by considering the image of the encoding function, which is the set of all codewords. The parameter we will focus on here is slightly different and we introduce it below.

A code $C$ of blocklength $n$ over alphabet $\Sigma$ is simply a subset $C \subseteq \Sigma^n$. The minimum distance of a code $C$, denoted $\delta(C)$, is the quantity $\min_{x\ne y \in C} \{\delta(x,y)\}$. The rate of a code is the quantity $R(C) = \frac{\log_q |C|}{n}$ where $q = |\Sigma|$. Note that the image of the encoding function has the same rate as the encoding function itself, provided the encoder is injective. 

Codes of minimum distance $\delta$ also have $(\delta/2,1)$-list decoders. Conversely for every code of minimum distance $\delta$, $\delta/2$ is the best possible value for $\rho$ when restricted to list size $L=1$. 
However allowing larger list sizes sometimes allows us to get better results: In particular, as we will see next, every code of rate $R$ and distance $\delta$ satisfies $R \leq 1 - \delta$. Using unique decoding we can thus correct at most $(1-R)/2$ fraction of errors. But by allowing larger lists we can go all the way to $1-R$ fraction of errors, thus doubling the error-correction capacity at the expense of leaving the receiver with the task of determining which element of the list was the intended message. 

\section{Upper Bounds on the Rate} 

We start with a very simple result dating back to the work of Singleton~\cite{Singleton} that establishes limits on the distance of a code, and the fraction of errors it can correct, conditioned on the rate. 

\begin{theorem}[Upper Bound on Distance and Error-Correction Fraction]\label{thm:php}~
\begin{enumerate}
    \item (The Singleton Bound): For every infinite family of codes $C$ of rate at least $R$ and distance at least $\delta$ we have $\delta \leq 1 - R$.
    \item Further if there are polynomially growing functions $q(\cdot),L(\cdot)$ such that there are infinitely many $n$ for which an $(R,p,q(n),L(n))$-coding schemes of blocklength $n$ exists, then $p \leq 1 - R$.   
\end{enumerate}

\end{theorem}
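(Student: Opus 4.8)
The plan is to prove both parts by a counting/pigeonhole argument based on projecting onto the first $k$ coordinates. For part~(1), the Singleton bound, suppose $C \subseteq \Sigma^n$ has $|C| = q^{Rn}$ (so rate $\geq R$) and distance $\geq \delta$. Consider the projection map $\pi \colon \Sigma^n \to \Sigma^{n-1}$ that deletes the last coordinate, or more generally the map deleting the last $d-1$ coordinates where $d = \delta n$. First I would observe that if the distance is at least $\delta$, then any two distinct codewords differ in at least $d = \delta n$ coordinates, so after deleting any $d-1$ coordinates the images of two distinct codewords still differ, i.e., the restriction of the projection to $C$ is injective. Hence $q^{Rn} = |C| \leq q^{n-(d-1)} = q^{n - \delta n + 1}$, which gives $Rn \leq n - \delta n + 1$, i.e., $R \leq 1 - \delta + 1/n$. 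Taking the limit along the infinite family (or noting $\delta n$ need not be an integer, in which case we delete $\lceil \delta n \rceil - 1$ coordinates and the $1/n$ slack still vanishes) yields $\delta \leq 1 - R$.

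For part~(2), I would first reduce to a statement about codes by extracting a code from the coding scheme: given an $(R,p,q(n),L(n))$-coding scheme $(E,D)$, let $C = E(\Sigma^k) \subseteq \Sigma^n$ be the image. The key claim is that any two distinct codewords of $C$ must be at Hamming distance greater than $p$ — in fact, I would argue something slightly different since $L > 1$ is allowed. The cleanest route: suppose for contradiction that there exist $L(n)+1$ distinct messages $m_0, \ldots, m_{L(n)}$ whose encodings all lie within a Hamming ball of radius $p/2$ (or some suitable radius) of a common center word $w$; then a $p$-bounded channel could map every $E(m_j)$ to $w$, and $D(w)$ must contain all $L(n)+1$ messages, contradicting $|D(w)| \leq L(n)$. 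So the codewords are "spread out" in the sense that no ball of the appropriate radius contains more than $L(n)$ of them. The cleanest way to extract a distance bound: I would instead directly count. Delete the first $k$ coordinates' worth — actually the right move is: consider restricting codewords to a carefully chosen set of roughly $(1-p)n$ coordinates and count collisions, using that each ``pattern'' on those coordinates can be shared by at most $L(n)$ codewords (since any two codewords agreeing there are within distance $p$, hence confusable), giving $q^{Rn} = |C| \leq L(n) \cdot q^{(1-p)n + o(n)}$, so $Rn \leq \log_q L(n) + (1-p)n + o(n)$. Since $L(n)$ is polynomial in $n$, $\log_q L(n) = O(\log n) = o(n)$, and dividing by $n$ and taking $n \to \infty$ along the infinite family gives $R \leq 1 - p$.

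The main obstacle I anticipate is making the ``confusability'' argument for $L > 1$ fully rigorous — specifically, pinning down the exact radius and the exact coordinate set so that (a) codewords agreeing on the chosen coordinates are genuinely within distance $p$ of each other and can be simultaneously confused with a single received word reachable by a $p$-bounded channel from each of them, and (b) the bookkeeping with non-integer quantities like $pn$ and $Rn$ works out with only $o(n)$ slack. One delicate point is that ``within distance $p$ of each other'' does not immediately give a common word within distance $p$ of both (the triangle inequality would give $p/2$ each from a midpoint, but midpoints need not exist in Hamming space over $\Sigma$); the fix is to note that if $x,y$ agree outside a set $T$ of size $\leq pn$, then the word $w$ that equals $x$ on half of $T$ and $y$ on the other half (and agrees with both elsewhere) is within distance $pn/2 \leq p n$... one must be slightly careful, but in fact the simplest clean statement is: a $p$-bounded channel can send $x$ to $y$ directly whenever $\delta(x,y) \leq p$, so if $L(n)+1$ codewords are pairwise within distance $p$ we can send all of them to any one fixed member of the family, forcing that received word's decoding list to have size $\geq L(n)+1$. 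Combined with the counting bound above — at most $L(n)$ codewords per pattern on a $\lceil(1-p)n\rceil$-subset of coordinates does not quite follow from pairwise-distance-$\leq p$ alone, so I would instead group codewords by their restriction to a $(1-p')n$-subset for $p'$ slightly larger than $p$, ensuring agreement forces distance $< p'$, and then take $p' \to p$. Handling these $\epsilon$'s and floors cleanly, while keeping the argument short, is the part that requires care; everything else is routine pigeonhole.
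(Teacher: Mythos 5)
Your argument is, in essence, the paper's own pigeonhole proof. For Part~1 you present the dual form: you show the projection that deletes $d-1$ coordinates is injective on $C$ and hence $|C|\leq q^{n-d+1}$, whereas the paper projects onto fewer than $\log_q|C|$ coordinates and finds a collision --- the same bound $R\leq 1-\delta+1/n$ either way. For Part~2 your core plan also matches the paper's: project to about $(1-p)n$ coordinates, pigeonhole to find $L(n)+1$ messages whose encodings share a restriction, and exhibit one received word to which all of them can be mapped by a $p$-bounded channel.

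The concern you flag at the end --- that ``pairwise distance $\leq p$'' is not enough and you might need a $p'$-slack and a limit $p'\to p$ --- is unfounded, and you had in fact already written down the clean fix. If $L+1$ codewords agree on a fixed set of $\lceil(1-p)n\rceil$ coordinates, then each lies at Hamming distance at most $p$ from, say, $E(m_1)$ (they can only differ on the complement, of size at most $pn$), so for each $i$ there is a $p$-bounded channel mapping $E(m_i)$ to $E(m_1)$; hence $m_i\in D(E(m_1))$ for every $i$ and $|D(E(m_1))|\geq L+1$, a contradiction. No midpoints and no $\epsilon$-tweak in $p$ are needed. (The paper instead pads the common restriction with zeros to build the confusing word, which is equivalent and avoids even having to single out one codeword as the target.) One hygiene point: count messages rather than codewords --- i.e., work with the fibers of $\pi\circ E:\Sigma^k\to\Sigma^{\lceil(1-p)n\rceil}$ as the paper does --- so you need not assume $E$ is injective; the pigeonhole then gives $q^k\leq L(n)\,q^{\lceil(1-p)n\rceil}$, and since $\log_q L(n)=o(n)$, dividing by $n$ and letting $n\to\infty$ yields $R\leq 1-p$.
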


\begin{proof}
    The proofs are similar and simple applications of the ``pigeonhole principle''.

    For the first part, let $C$ be a code of length $n$, rate $R$ over an alphabet $\Sigma$ of size $q$. So we have $k = \log_q |C| \geq Rn$. Let $\ell$ be the largest integer smaller than $k$. Note $\ell \geq k-1$. Now consider the projection function $\pi:\Sigma^n\to\Sigma^\ell$ that maps $(x_1,\ldots,x_n) \mapsto (x_1,\ldots,x_\ell)$. Let $\pi(C) = \{\pi(x) | x \in C\}$. Since $|\pi(C)| \leq q^\ell < q^k = |C|$ we get that there must exist two distinct codewords $x,y \in C$ such that $\pi(x) = \pi(y)$. For this pair $x,y$, we have $\delta(x,y) \leq \frac{n-\ell}{n}$ (since they can disagree on only the coordinates $i \in \{\ell+1,\ldots,n\}$). We conclude $\delta(C) \leq \delta(x,y) \leq 1 - \frac{\ell}n \leq 1 - \frac{\log_q|C|}n + 1/n = 1 - R + 1/n$. Taking limits as $n \to \infty$ we get $\delta \leq 1 - R$.

    The second part is similar except we need to be slightly more careful with the choice of $n$. Assume for contradiction that there exists $R = 1 - p + \epsilon$ for some $\epsilon > 0$, polynomial $L$ (and arbitrarily growing function $q$) such that for infinitely many $n$ there is a $k \geq Rn$ and $\Sigma$ with $|\Sigma| \leq q(n)$ and an encoding/decoding pair $E,D$ with $E:\Sigma^k \to \Sigma^n$ and $D: \Sigma^n \to (\Sigma^k)^L$ that form a $(p,L)$-list decoding pair. Now assume $n$ is large enough so that $q^{\epsilon n-1} > L = L(n)$. Let $\ell$ be the largest integer that is smaller than $k- \epsilon n$ and let $\pi:\Sigma^n\to\Sigma^\ell$ be the projection function to the first $\ell$ coordinates (as above). Now consider the map $\pi \circ E: \Sigma^k \to \Sigma^\ell$ given by $\pi\circ E(m) = \pi(E(m))$. The average number of preimages under $\pi \circ E$ of an element in $z \in \Sigma^\ell$ is at least $q^{k-\ell} \geq q^{\epsilon n-1} > L$, and so in particular there exists a $z \in \Sigma^\ell$ and messages $m_1,\ldots,m_{L+1}$ such that $\pi(E(m_i))=z$ for every $i \in [L+1]$. Now consider list-decoding $r \in \Sigma^n$ where $r = (z_1,\ldots,z_{\ell},0,0,\ldots,0)$ (where we assume for notational simplicity that $0$ is an element of $\Sigma$). Since $|D(r)| \leq L$ there must exists $i \in [L+1]$ such that $m_i \not\in D(r)$. Now consider transmitting $E(m_i)$ with channel $\ch:\Sigma^n\to\Sigma^n$ that maps the last $n-\ell$ coordinates of the transmitted word to $0$. Since this channel makes at most $n-\ell$ errors, it is a $p$-bounded channel (by our choice of $\ell$ above). Under these choices we have $\Ch(E(m_i)) = r$ and $m_i \not\in D(r)$ which violates the definition of $(p,L)$-decoding scheme.   
\end{proof}

Both proofs above seem extremely weak in proving limits on the power of error-correcting coding schemes and codes. Surely there ought to be a better way of choosing the projections (leave alone using more powerful reasoning) that proves better bounds? This is a tempting thought, but shockingly turns out to be incorrect - the bound above is actually tight. In the case of error-correcting codes this tightness is surprisingly elementary fact as we will see next. For the extension to (algorithmic) coding schemes the result is harder and will form the rest of the article. 

\section{The Reed-Solomon Codes}

We start by describing our first error-correcting code. But before doing that we do a convention switch that will be convenient through the rest of this article. Instead of viewing words as sequences in say $\Sigma^n$, we will view them as functions mapping $[n]$ (or more generally some set of size $n$) to $\Sigma$. In this language a code $C$ is a subset of $\{f: S \to \Sigma\}$ for some set $S$ with $|S|=n$. 

The Reed-Solomon codes are codes where the alphabet is $\F_q$ the finite field with $q$ elements (note such a field exists if and only if $q$ is a power of a prime - this restricts our alphabets accordingly). The encoding function of the Reed-Solomon code views elements of the domain as the coefficients of a univariate polynomial of degree at most $k-1$ (over some formal variable $X$). The encoding function is specified further by a set $S \subseteq \F_q$ of size $n$. (So this restricts $n$ to be at most $q$, which fits our general desire of working with codes over polynomial sized alphabets, but does not lead to codes of arbitrarily long block length over any fixed alphabet.) The encoding now evaluates the polynomial over all elements of $S$. This leads to the following definition.

\newcommand{\RS}{\mathrm{RS}}
\begin{definition}[Reed-Solomon Codes]
    Given $k \leq n \leq q$, a finite field $\F_q$ and $S \subseteq \F_q$ with $|S| = n$, the Reed Solomon encoding function $E^\RS = E^\RS_{k,n,\F_q,S}:\F_q^k \to \{f:S \to \F_q\}$ is given by $E(m) = P_m$ where $m = (m_0,\ldots.m_{k-1}) \in \F_q^k$ and $P_m(\alpha) = \sum_{i=0}^{k-1} m_i \alpha^i$ for $\alpha \in S$. The image of the Reed-Solomon encoding function $E^\RS$ is termed the Reed-Solomon code $C^\RS = C^\RS_{k,n,\F_q,S}$
\end{definition}

From the definition it is clear that the code has length $n$, alphabet of size $q$ and has rate $k/n$. The only remaining parameter to be determined is $\delta(C^\RS)$.

\begin{theorem}
    For every $k \leq n \leq q$ such that the finite field $\F_q$ exists and every $S \subseteq \F_q$ with $|S| = n$, the Reed Solomon code $C^\RS_{k,n,\F_q,S}$ has minimum distance $\delta(C^\RS) = 1 - k/n + 1/n$. 
\end{theorem}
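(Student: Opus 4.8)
The plan is to prove the two inequalities $\delta(C^\RS) \le 1 - k/n + 1/n$ and $\delta(C^\RS) \ge 1 - k/n + 1/n$ separately. The upper bound is immediate from the Singleton bound (\autoref{thm:php}, part 1 — or rather its finite-$n$ form established in the proof): the Reed--Solomon code has rate exactly $R = k/n$, so $\delta(C^\RS) \le 1 - R + 1/n = 1 - k/n + 1/n$. Alternatively one can see it directly: pick any nonzero polynomial of degree exactly $k-1$ (e.g. $P(X) = \prod_{i=1}^{k-1}(X - \alpha_i)$ for distinct $\alpha_i \in S$), which vanishes at $k-1$ points of $S$ and hence its codeword agrees with the all-zeros codeword on $k-1$ coordinates, giving a pair at distance exactly $(n-(k-1))/n = 1 - k/n + 1/n$.

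The lower bound is where the algebra does the work. Since the Reed--Solomon code is linear (the encoding map $m \mapsto P_m$ is $\F_q$-linear and evaluation at points of $S$ is linear), the minimum distance equals the minimum weight of a nonzero codeword, i.e., $\delta(C^\RS) = \min_{m \ne 0} \delta(E^\RS(m), 0)$. So it suffices to show that for every nonzero message $m \in \F_q^k$, the polynomial $P_m$ of degree at most $k-1$ is nonzero and vanishes on at most $k-1$ of the $n$ points of $S$; equivalently, $E^\RS(m)$ has at least $n - (k-1)$ nonzero coordinates, so $\delta(E^\RS(m),0) \ge (n-k+1)/n = 1 - k/n + 1/n$.

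The key step — and the only place anything nontrivial happens — is the fundamental fact that a nonzero univariate polynomial of degree $d$ over a field has at most $d$ roots. First I would note that $m \ne 0$ implies $P_m \ne 0$ as a formal polynomial (its coefficient vector is exactly $m$), so $P_m$ is a nonzero polynomial of degree at most $k-1$. Then, applying the root bound with $d \le k-1$, the set $\{\alpha \in \F_q : P_m(\alpha) = 0\}$ has size at most $k-1$, so in particular $|\{\alpha \in S : P_m(\alpha) = 0\}| \le k-1$, whence $E^\RS(m)$ is nonzero on at least $n - k + 1$ coordinates of $S$. Combining the two bounds gives $\delta(C^\RS) = 1 - k/n + 1/n$ exactly.

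The main ``obstacle'' is really just invoking the right lemma: one should either cite or quickly prove (by induction on degree, factoring out $(X - \alpha)$ at a root) that a degree-$d$ nonzero polynomial over a field has at most $d$ roots. There is no genuine difficulty beyond this; the content of the theorem is precisely that Reed--Solomon codes meet the Singleton bound, and linearity plus the root-counting lemma is exactly what makes that happen. One minor point to be careful about is the hypothesis $n \le q$, which guarantees $S \subseteq \F_q$ with $|S| = n$ actually exists, and the fact that we need $P_m$ to have degree \emph{at most} $k-1$ rather than exactly $k-1$ — the root bound tolerates this since a smaller degree only means fewer roots.
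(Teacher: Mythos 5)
Your proof is correct and follows essentially the same approach as the paper: both reduce the lower bound to the fact that a nonzero polynomial of degree at most $k-1$ over a field has at most $k-1$ roots (the paper applies this to the difference $Q = P_m - P_{m'}$ of two codewords, which is the same as your linearity/minimum-weight reduction). You are somewhat more careful than the paper in also establishing the matching upper bound explicitly (via Singleton or a direct example), whereas the paper jumps from the inequality $\delta(E^\RS(m),E^\RS(m')) \ge 1-k/n+1/n$ directly to equality, implicitly invoking the Singleton bound.
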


\begin{proof}
    The proof follows immediately from the fact that over any field a non-zero degree $d$ polynomial has at most $d$ roots. Consider two distinct messages $m = (m_0,\ldots,m_{k-1})$ and $m' = (m'_0,\ldots,m'_{k-1})$. Now consider the polynomials $P(X) = \sum_{i=0}^{k-1} m_i X^i$, $P'(X) = \sum_{i=0}^{k-1} m'_i X^i$ and $Q(X) = P(X)-P'(X)$. By definition $Q$ is a non-zero polynomial of degree at most $k-1$. Now for every $\alpha$ the encoding of $m$ and $m'$ agree on the $\alpha$th coordinate, i.e., where $P(\alpha) = P'(\alpha)$ we have $Q(\alpha) = 0$. Applying the aforementioned fact about polynomials we conclude that $E^\RS(m)$ and $E^\RS(m')$ agree in at most $k-1$ places and so disagree in at least $n-k+1$ places. We conclude, that for every $m\ne m'$, we have  $\delta(E^\RS(m),E^\RS(m')) \geq 1 - k/n + 1/n$ and so $\delta(C^\RS) = 1 - k/n + 1/n$.
\end{proof}

Thus we find, not only that there is a code that meets the asymptotic limit $\delta \leq 1 - R$, but it meets the bound $\delta \leq 1 - k/n + 1/n$, exactly for every $k \leq n \leq q$ provided $q$ is a prime power. Over polynomially large alphabets this completely settles the question of the best code with respect to the rate-vs-distance tradeoff. But how does it do for error-correction? And what about the algorithmic efficiency? We discuss these questions next.

First, as pointed out in the paper of Hamming~\cite{Hamming} introducing error-correcting codes, every code of distance $\delta$ has a $(\frac\delta2 - \frac1{2n},1)$-decoder (though this decoder is not necessarily efficient). In the context of Reed-Solomon codes this is equivalent to the statement that given any function $r:S \to \F_q$, there is at most one polynomial $p$ of degree less than $k$ such that 
$|\{\alpha \in S | p(\alpha) \ne r(\alpha)\}| \leq \frac{n-k}2$. And this can easily be seen to be a consequence of the statement that a non-zero polynomial has fewer roots than its degree.\footnote{Specifically, if two distinct polynomials disagree with $r$ on at most $\frac{n-k}2$ points each of $S$, then they disagree with each other on at most $n-k$ of the points of $S$ and so they agree with each other at $k$ points. But this implies that their difference is a non-zero polynomial of degree less than $k$ with at least $k$ roots --- a contradiction.}

So at least as far as unique decoding is concerned and algorithmic considerations are set aside, the Reed-Solomon codes also provide the best possible encoding/decoding scheme. But clearly algorithmic efficiency is crucial to any potential application. (And of course if we can correct more errors than $\delta/2$ that would be good too!) Now the encoding algorithm is already quite efficient - the encoding of a message can be trivially computed in $O(kn)$ time by following the definition and more sophisticated algorithms can solve it in time $O(n\log^2 n)$. The real algorithmic challenge is in decoding. As we will see in the next section, this challenge actually leads to an elegant algebraic question and solutions that also employ algebraic insights.

\section{(List-)Decoding the Reed-Solomon Code: A basic algorithm}

Before giving a solution to the list-decoding problem for \textsc{Reed-Solomon} codes, let us first understand the underlying algorithmic task. Recall that the sender transmits the values of a function $p:S \to \F_q$ where this function is obtained by evaluating a polynomial $p \in \F_q[X]$ of degree less than $k$ at all points of $S$. The receiver knows $S$ and receives a function $r:S\to\F_q$ with the guarantee that $\delta(p,r) \leq p$. In other words $p$ and $r$ agree on at least $t := n(1-p)$ points, where $n = |S|$ and the decoder wished to compute a list of at most $L(n)$ polynomials of degree less than $k$ that includes $p$. We thus get the following problem.

\begin{description}
    \item[Reed-Solomon List-Decoding Problem:]~
    \item[{Input:}] $\F_q,k,t$, $S \subseteq \F_q$ with $|S|=n$ and a function $r:S\to \F_q$. 
    \item[{Output:}] A list $\{p_1,p_2,\ldots\}$ of at most $L$ polynomials of degree less than $k$ that includes every polynomial of degree less than $k$ such that $|\{a \in S \mid p(a) = r(a)\}| \geq t$.
\end{description}

(Note this corresponds to a $(p,L)$-decoder for $p = (n-t)/n$.)

While the problem has algebraic elements, the ``error-locations'' are not algebraic and searching for them in a natural way leads to exponential time algorithms (in either the number of errors, or the number of messages). The solution we describe, from \cite{Sudan97}, will not aim to find these locations directly, but to capture the function $r$ algebraically in a non-trivially nice way. This nice algebraic description, it will turn out, will be able to reveal the information that the decoder seeks. 

In our first attempt we will not aim to optimize the parameter $t$ but get something non-trivially interesting. (Once we understand the idea some optimization is easy and we will do that in the next section.) Our first lemma explains one nice algebraic way to capture the function $r$.

\begin{lemma}\label{lem:biv-interp}
    For every function $r:S \to \F_q$ with $|S|=n$ there exists a non-zero bivariate polynomial $Q(X,Y) \in \F_q[X,Y]$ with $\deg_X(Q),\deg_Y(Q) \leq \sqrt{n}$ such that for all $a\in S$, $Q(a,r(a))=0$. 
\end{lemma}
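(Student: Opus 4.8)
The plan is a dimension-counting (pigeonhole) argument carried out in the vector space of low-degree bivariate polynomials. First I would fix the degree bound: set $d = \lfloor \sqrt{n} \rfloor$ and consider the $\F_q$-vector space $V$ consisting of all formal polynomials of the shape $Q(X,Y) = \sum_{0 \le i,j \le d} c_{ij} X^i Y^j$. Since the monomials $X^i Y^j$ for $0 \le i,j \le d$ are linearly independent as formal polynomials, $V$ has dimension exactly $(d+1)^2$, and $\deg_X Q, \deg_Y Q \le d \le \sqrt n$ for every $Q \in V$, so any nonzero element of $V$ already satisfies the degree requirement of the lemma.

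Next I would impose the vanishing constraints. For each $a \in S$, the condition $Q(a, r(a)) = 0$ unwinds to the single equation $\sum_{0 \le i,j \le d} c_{ij}\, a^i r(a)^j = 0$, which is homogeneous and linear in the unknown coefficient vector $(c_{ij})$. Ranging over all $a \in S$, this produces a homogeneous linear system with $n = |S|$ equations in $(d+1)^2$ unknowns.

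The crux is the inequality $(d+1)^2 > n$: because $d = \lfloor \sqrt n \rfloor$ satisfies $d + 1 > \sqrt n$, squaring gives $(d+1)^2 > n$. Thus the number of unknowns strictly exceeds the number of equations, so the system has a nontrivial solution $(c_{ij}) \ne 0$. The polynomial $Q$ assembled from this solution is then nonzero (a nonzero coefficient vector corresponds to a nonzero element of $V$), satisfies $Q(a, r(a)) = 0$ for all $a \in S$ by construction, and has $\deg_X Q, \deg_Y Q \le d \le \sqrt n$, which is exactly the statement.

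There is really no hard step here — this is a textbook linear-algebra counting argument — but the one place I would be slightly careful is to keep the distinction between ``nonzero as a formal polynomial'' and ``nonzero as a function on $\F_q^2$''; the lemma asks only for the former, and that is precisely what a nonzero coefficient vector supplies, so the (potentially genuine, over a finite field) gap between formal polynomials and the functions they induce never intervenes. I would also remark in passing that one could just as well take $d = \lceil \sqrt n \rceil$; the floor works only because $(\lfloor \sqrt n \rfloor + 1)^2 > n$ holds strictly.
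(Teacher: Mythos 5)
Your proof is correct and follows exactly the paper's argument: both set up a homogeneous linear system in the $(\lfloor\sqrt{n}\rfloor+1)^2 > n$ coefficients $q_{ij}$ and invoke the fact that more unknowns than equations guarantees a nontrivial solution. The only difference is that you are a bit more explicit about taking $d = \lfloor\sqrt n\rfloor$ and about the formal-polynomial-versus-function distinction, which the paper leaves implicit.
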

\begin{proof}
    Expressing $Q(X,Y) = \sum_{i=0}^{\sqrt{n}}\sum_{j=0}^{\sqrt{n}}q_{ij} X^iY^j$ we find that we are trying to find an assignment to $(\sqrt{n}+1)^2> n$ variables $\{q_{ij}\}_{i,j}$ satisfying $n$ homogeneous linear constraints (specifically $\sum_{i=0}^{\sqrt{n}}\sum_{j=0}^{\sqrt{n}}q_{ij} a^ir(a)^j = 0$). Since this is a homogeneous linear system with number of constraints being less that the number of variables, a non-zero solution exists.
\end{proof}

On the one hand the lemma above is ``trivial'' --- it made no assumptions on the function $r$ to prove the existence of $Q$. On the other hand it is algebraically quite powerful since the algebraic degree of $Q$ is just twice $\sqrt{n}$. To see why it helps with list decoding, consider a setting where $t > 2k\sqrt{n}$ and the codeword that the sender sends (before errors) is $p(X)$. In this case the polynomials $Q(X,Y)$ and $R(X,Y) = Y-p(X)$ share $t$ common zeroes in the plane $\F_q \times \F_q$. Specifically every point $a\in S$ such that $r(a)=p(a)$ satisfies $Q(a,r(a))=0 =R(a,r(a))$, where the former equality holds for all $a$ and the latter holds since $r(a) = p(a)$. A classical result from the origins of algebraic geometry, namely B\'ezout's theorem in the plane, tells us that polynomials $Q$ and $R$ of degree $d_1$ and $d_2$ respectively can share more that $d_1d_2$ common zeroes only if they share a common factor. In our setting $d_1 = 2\sqrt{n}$ and $d_2 = k$ and so if $t > 2k\sqrt{n}$ then it follows that $Q(X,Y)$ and $Y-p(X)$ share a common factor. But the latter polynomial is irreducible and so it must be an irreducible factor of $Q(X,Y)$. And this is algorithmically effective since a bivariate polynomial can be factored into irreducible factors in polynomial time over finite fields (and many other settings) due to an illustrious line of work including~\cite{Berlekamp:Factoring-Large,LLL,Kaltofen,Lenstra,grig,GathenG}. 

Since the application of B\'ezout's theorem is something we want to optimize, let us give a self-contained proof in our setting (with already slightly improved parameters). 

\begin{lemma}\label{lem:biv-bezout}
Let $Q(X,Y)\in \F_q[X,Y]$ be a polynomial with $\deg_X(Q) \leq D_x$ and $\deg_Y(Q) \leq D_y$. Further let $p(X)\in\F_q[X]$ be a degree $d$ polynomial. Then if the set of common zeroes of $Q(X,Y)$ and $Y-p(X)$ in $\F_q \times\F_q$, i.e., the set $Z:=\{(a,b)\in \F_q^2 \mid Q(a.b) = b-p(a) = 0\}$, has size greater than $D_x + d\cdot D_y$ then $Y-P(X)$ divides $Q(X,Y)$. 
\end{lemma}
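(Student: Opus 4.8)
The plan is to reduce the bivariate statement to the univariate fact that a nonzero polynomial of degree $d'$ has at most $d'$ roots, by substituting $Y = p(X)$ into $Q$. Define the univariate polynomial $g(X) := Q(X, p(X)) \in \F_q[X]$. First I would bound $\deg(g)$: since $\deg_X(Q) \leq D_x$ and every occurrence of $Y$ (to a power at most $D_y$) gets replaced by $p(X)$ (of degree $d$), we get $\deg(g) \leq D_x + d \cdot D_y$. Next, observe that every common zero $(a,b) \in Z$ has $b = p(a)$ (from $b - p(a) = 0$), and therefore $g(a) = Q(a, p(a)) = Q(a,b) = 0$. Crucially, distinct points of $Z$ have distinct $X$-coordinates: if $(a,b),(a,b') \in Z$ then $b = p(a) = b'$, so the points coincide. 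Hence $g$ has at least $|Z| > D_x + d\cdot D_y$ distinct roots in $\F_q$. Since $\deg(g) \leq D_x + d \cdot D_y < |Z|$, the only possibility is $g \equiv 0$, i.e., $Q(X, p(X)) = 0$ identically.

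Finally I would upgrade ``$Q(X,p(X)) = 0$'' to the divisibility claim ``$Y - p(X) \mid Q(X,Y)$''. The clean way is to perform polynomial division of $Q$ by the monic (in $Y$) polynomial $Y - p(X)$ inside the ring $(\F_q[X])[Y]$: write $Q(X,Y) = (Y - p(X)) \cdot A(X,Y) + r(X)$ where the remainder $r$ has $\deg_Y < 1$, hence lies in $\F_q[X]$. Substituting $Y = p(X)$ gives $0 = g(X) = r(X)$, so $r \equiv 0$ and $Y - p(X)$ divides $Q(X,Y)$ in $(\F_q[X])[Y] \subseteq \F_q[X,Y]$.

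I do not expect a genuine obstacle here — the argument is elementary — but the two points that need a little care are (i) getting the degree bound $\deg(Q(X,p(X))) \leq D_x + d D_y$ right (it follows term-by-term: a monomial $c\, X^i Y^j$ with $i \le D_x$, $j \le D_y$ contributes degree $i + jd \le D_x + d D_y$), and (ii) justifying that division by $Y - p(X)$ is valid in $(\F_q[X])[Y]$ — this works precisely because $Y - p(X)$ is monic as a polynomial in $Y$, so no coefficients need to be inverted and the division algorithm goes through over the (non-field) coefficient ring $\F_q[X]$. With these in hand the proof is complete.
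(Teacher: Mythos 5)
Your proof is correct and follows essentially the same route as the paper's: substitute $Y=p(X)$ to get $g(X)=Q(X,p(X))$, observe that distinct points of $Z$ have distinct $X$-coordinates so $g$ has more than $D_x+dD_y$ roots and must vanish, and then conclude divisibility by $Y-p(X)$. The one small difference is your final step: you divide directly in $(\F_q[X])[Y]$, using that $Y-p(X)$ is monic in $Y$, whereas the paper divides in $\F_q(X)[Y]$ over the field of rational functions; your version is a touch cleaner since it avoids having to argue separately that the resulting quotient actually lies in $\F_q[X,Y]$.
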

\begin{proof}
    Consider the polynomial $g(X):=Q(X,p(X))$. This is a univariate polynomial of degree at most $D_x + d\cdot D_y$. But for every $(a,b)\in Z$, we have $g(a) = Q(a,p(a)) = Q(a,b) = 0$, where the second equality holds since $b-p(a) = 0$ for $(a,b)\in Z$. Note this also implies that if $(a.b)\ne(a',b')\in Z$ then $a \ne a'$ (since if they were equal we would also have $b=p(a) = p(a') = b'$). Thus the condition on the size of $Z$ implies $g$ has more zeroes than its degree and so $g(X)=0$. 

    Now we view $Q(X,Y)$ as a polynomial in $\F_q(X)[Y]$ (a polynomial in $Y$ with coefficients from the field of rational functions in $X$). We have just seen that substituting $Y=p(X)$ gives us a root of this polynomial and so by the ``division algorithm'' for polynomials over fields we get $Y-p(X)$ is a divisor of $Q(X,Y)$. 
\end{proof}

The resulting algorithm and implications are given below. 

\begin{tabbing}
\quad\= {\bf Basic RS List-Decoding Algorithm}\\
\quad Step 1 \quad\= Compute a non-zero polynomial $Q(X,Y)$ with $\deg_X(Q),\deg_Y(Q)\leq \sqrt{n}$ \\
\quad \quad \quad \quad \quad \quad s.t. $Q(a,r(a))=0$ for all $a\in S$\\
\quad Step 2 \quad\= Factor $Q(X,Y) = Q_1(X,Y)\cdots Q_\ell(X,Y)$ into irreducibles.\\
\quad \quad \quad \quad \quad \quad For every $i\in[\ell]$ if $Q_i(X,Y)$ is of the form $ = Y-p_i(X)$ include $p_i(X)$ in output list. 
\end{tabbing}

\begin{theorem}
    {\bf Basic RS List-Decoding Algorithm} can be implemented to run in polynomial time in $q$. It solves the Reed-Solomon list-decoding problem when $t > k\cdot \sqrt{n}$. 
\end{theorem}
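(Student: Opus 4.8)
The plan is to prove the two assertions separately: that the algorithm runs in time $\mathrm{poly}(q)$, and that, under the hypothesis $t > k\sqrt{n}$, the list it returns meets the specification of the Reed-Solomon list-decoding problem. All the mathematical content of correctness is already packaged in \Cref{lem:biv-interp} and \Cref{lem:biv-bezout}; the running-time bound is a matter of citing standard algorithms.

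For the running time, I would note that Step 1 amounts to solving the homogeneous linear system from the proof of \Cref{lem:biv-interp}: the unknowns are the $(\lfloor\sqrt n\rfloor+1)^2 = O(n)$ coefficients $q_{ij}$, and there is one equation $\sum_{i,j} q_{ij}\, a^i r(a)^j = 0$ for each of the $n$ points $a\in S$. \Cref{lem:biv-interp} guarantees a nonzero solution, which Gaussian elimination finds using $\mathrm{poly}(n)$ operations in $\F_q$ (after an $O(n^2)$-operation precomputation of the coefficients $a^i r(a)^j$), each operation costing $\mathrm{poly}(\log q)$. Step 2 factors a bivariate polynomial over $\F_q$ of degree at most $\lfloor\sqrt n\rfloor$ in each variable, which takes $\mathrm{poly}(n,\log q)$ time by the cited line of work on polynomial factorization over finite fields~\cite{Berlekamp:Factoring-Large,LLL,Kaltofen,Lenstra,grig,GathenG}; extracting and normalizing the factors that are linear in $Y$ is trivial. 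Since $k \le n \le q$, all of this is $\mathrm{poly}(q)$.

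For correctness, I would fix a polynomial $p(X)$ of degree $d \le k-1$ with $|\{a\in S : p(a)=r(a)\}| \ge t$ and show $p$ is output. Let $Z = \{(a,b)\in\F_q^2 : Q(a,b) = b-p(a) = 0\}$. Every $a\in S$ with $r(a)=p(a)$ contributes a distinct point $(a,p(a))$ to $Z$, since $Q(a,p(a)) = Q(a,r(a)) = 0$ by construction of $Q$; hence $|Z|\ge t$. Now apply \Cref{lem:biv-bezout} with $D_x = D_y = \lfloor\sqrt n\rfloor$ and this $d$: since $D_x + d\cdot D_y = k\lfloor\sqrt n\rfloor \le k\sqrt n < t \le |Z|$, the lemma gives $Y-p(X)\mid Q(X,Y)$. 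Because $Y-p(X)$ is a monic degree-one polynomial in $Y$ over $\F_q[X]$, it is irreducible in $\F_q[X,Y]$, so unique factorization forces it (up to a nonzero scalar) to be one of the factors $Q_i$, and the algorithm therefore includes $p$. It remains to bound the list: the distinct factors of the form $Y-p_i(X)$ are pairwise coprime, so their product divides $Q$ and has $Y$-degree equal to their number, whence there are at most $\deg_Y(Q) \le \sqrt n$ of them, each of $X$-degree at most $\lfloor\sqrt n\rfloor$; discarding those of degree $\ge k$ yields a list of $\mathrm{poly}(n)$ polynomials of degree $<k$ that contains every polynomial of degree $<k$ with at least $t$ agreements with $r$ --- exactly the required output.

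The essential steps are the two invoked lemmas, so the only genuinely new points are minor: the arithmetic $D_x + d\,D_y = k\lfloor\sqrt n\rfloor \le k\sqrt n$ that pins down $t > k\sqrt n$ as the right threshold, the irreducibility-plus-unique-factorization step that upgrades ``$Y-p(X)$ divides $Q$'' to ``$Y-p(X)$ is one of the computed factors,'' and the black-box use of bivariate factorization over $\F_q$ for the running-time claim. I would expect the factorization black box and the (entirely routine) irreducibility argument to be the only places a careful reader might want more detail; there is no real obstacle.
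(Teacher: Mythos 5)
Your proposal is correct and follows essentially the same route as the paper's own proof: invoke \Cref{lem:biv-interp} to guarantee Step~1 succeeds, invoke \Cref{lem:biv-bezout} with $D_x = D_y = \sqrt n$ and $d = k-1$ to conclude $Y - p(X)$ divides $Q$, and cite standard linear algebra plus bivariate factorization for the running time. The extra details you supply --- the unique-factorization step promoting ``divides $Q$'' to ``appears as a $Q_i$,'' and the $\deg_Y(Q)\le\sqrt n$ bound on the output list size --- are sensible elaborations the paper leaves implicit, not a different argument.
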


\begin{proof}
    The efficient implementation follows from the fact that Step 1 can be solved by solving a linear system, and Step 2 can employ fast bivariate polynomial factorization algorithms. 

    The correctness follows from the fact, by~\Cref{lem:biv-interp} a polynomial $Q$ satisfying the conditions of Step 1 always exists (and so Step 1 will return such a polynomial); and by \Cref{lem:biv-bezout} any polynomial satisfying the conditions of Step 1 will have $y-p(X)$ as an irreducible factor, provided $t > D_x + D_y\cdot d$ where $D_x = D_y = \sqrt{n}$ and $d = k-1$.  
\end{proof}

\section{Optimizing the List-decoding algorithm: Weighted-degrees and Multiplicities}

The algorithm as analyzed above is non-trivially effective in some ranges of $k$, but not the most important ones when considering the general challenge of error-correction. Specifically it is useful only if $k < \sqrt{n}$ (since otherwise $t$ would need to be larger than $n$ which is not possible). Getting a polynomial time algorithm in this range is not trivial since the natural algorithm would take time exponential in $k$. But in potential application settings we like $k$ to be $\Omega(n)$ so that $R = k/n$ can be positive, and the algorithm above does not work in this regime. But it turns out we can now optimize the algorithm and get significantly better quite easily and get to known limits with a bit more algebra.

But first, let's mention what is known about the limits of $t$ for which the problem may be solvable in polynomial time. Note that to output a list of size $L$ the algorithm must run in time at least $L$ and so the question we focus on is: ``For what value of $t$ is $L$ known to be polynomially bounded in $n$?''. When $t \geq \frac{n+k}2$ we are at the unique decoding limit, i.e., here we know $L=1$. When $t > \sqrt{2kn}$ an inclusion-exclusion argument shows that $L \leq 2n/k$ if $t \geq \sqrt{2nk}$\footnote{Suppose there are $L$ polynomials $p_1,\ldots,p_{L}$ that agree with the received word on $t$ points, then we have that there at $L$ sets $S_1,\ldots,S_{L}$ such $S_i \subseteq S$ with $|S_i| \geq t$ and $|S_i\cap S_j | \leq k-1$ for $i\ne j$. Using an ``inclusion-exclusion argument'', i.e., the sequence of inequalities $n\geq |\cup_i S_i| \geq \sum_i |S_i| - \sum_{i < j} |S_i \cap S_j| \geq Lt - \binom{L+1}2 (k-1)$ one can conclude that $L \leq 2n/k$ if $t \geq \sqrt{2nk}$. In particular we get $L$ is polynomially bounded in $n$ (and even a constant if $n = O(k)$.}.
This new bound is better than (smaller than) the previous one when $k$ is relatively small compared to $n$, but for say $k=n/2$ the previous bound was better. The final improvement, known as the Johnson bound, shows that when $t > \sqrt{kn}$ then $L$ grows polynomially in $n$. When $L$ is larger we really don't know the answer: There are results suggesting that $t \leq \sqrt{kn}$ could really lead to superpolynomial list sizes for some sets $S \subseteq \F_q$ over special fields $\F_q$ (notably the work of Ben-Sasson, Kopparty and Radhakrishnan~\cite{BSKR}). Other recent results show that for random sets $S \subseteq \F_q$ of sufficiently small size, one could imagine going down to $t = k+o(n)$ (see for instance, Alrabiah, Guruswami and Li~\cite{AlrabiahGL2024} and references therein). So beyond the Johnson bound, the answer seems open. In the rest of this section we will show how to achieve the Johnson bound efficiently, i.e., with a polynomial time decoder.

Returning to our goal of reducing the number of agreements needed to perform list-decoding, we start with a very simple observation that exploits the imbalance between the role of $D_x$ and $D_y$ in \Cref{lem:biv-bezout}. Clearly, to reduce $t$, it is preferable to reduce $D_y$ even at a cost of somewhat larger $D_x$. All we need is that the number of valid monomials in the support of $Q$ be strictly larger than $n$ to make \Cref{lem:biv-interp} work. So if we let $D_y$ be a parameter and set $D_x = \lceil (n+1)/D_y \rceil$ then the proof of \Cref{lem:biv-interp} can be modified to prove that a non-zero polynomial $Q$ with $\deg_X(Q) \leq D_x$ and $\deg_Y(Q)\leq D_y$ exists satisfying $Q(a,r(a))=0$ for all $a \in S$. And then  \Cref{lem:biv-interp} shows that if $t > D_x + (k-1)D_y$ then the list-decoding problem can be solved by factoring $Q$. Setting $D_y \approx \sqrt{n/k}$ and $D_x \approx \sqrt{nk}$ then yields a solution to the Reed-Solomon List-Decoding Problem for $t > 2\sqrt{kn}$. This is already gets us to ranges where $k = \Omega(n)$ but not up to the Johnson bound, or even the inclusion-exclusion bound.

But we can optimize the Basic Algorithm further in two steps. The first focuses on the monomials in the support of $Q$ more carefully using the notion of weighted degree.

\subsection{Weighted Degrees}

\begin{definition}
    For positive integers $d_x$ and $d_y$, the $(d_x,d_y)$-weighted degree of the monomial $X^i Y^j$ is the quantity $id_x + jd_y$.
    The $(d_x,d_y)$-weighted degree of a polynomial $Q(X,Y) = \sum_{i,j}q_{ij}X^iY^j$ is the maximum over all monomials $X^iY^j$ with $q_{ij}\ne 0$ of the $(d_x,d_y)$-weighted degree of the monomial $X^i Y^j$.
\end{definition}

The following lemmas are easy adaptations of \Cref{lem:biv-interp} and \Cref{lem:biv-bezout} to the setting of weighted degrees.

\begin{lemma}\label{lem:wt-biv-interp}
    For every function $f:S \to \F_q$ with $|S|=n$ there exists a non-zero bivariate polynomial $Q(X,Y) \in \F_q[X,Y]$ with $(1,k-1)$-weighted degree at most $\sqrt{2kn}$ such that for all $a\in S$, $Q(a,r(a))=0$. 
\end{lemma}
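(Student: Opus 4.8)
The plan is to follow the template of \Cref{lem:biv-interp} essentially verbatim: realize $Q$ as a nonzero solution of a homogeneous linear system. The unknowns are the coefficients $q_{ij}$ of $Q(X,Y) = \sum_{i,j} q_{ij} X^iY^j$, where $(i,j)$ ranges over exactly those monomials $X^iY^j$ whose $(1,k-1)$-weighted degree $i + (k-1)j$ is at most $D := \lfloor\sqrt{2kn}\rfloor$ (note this is an integer bound, since the weighted degree is always an integer). The constraints are again the $n$ homogeneous linear equations $Q(a,r(a)) = 0$, one for each $a\in S$. So, exactly as in \Cref{lem:biv-interp}, it suffices to show that the number $N$ of admissible monomials is strictly larger than $n$; a nonzero $Q$ then exists by linear algebra.

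The whole argument thus reduces to a monomial count. I would observe that $N$ equals the number of lattice points $(i,j)\in\Z_{\ge 0}^2$ with $i + (k-1)j \le D$. Assuming $k\ge 2$ (the case $k=1$ is trivial, since then the $(1,0)$-weighted degree of $X^iY^j$ is just $i$ and infinitely many monomials qualify), set $w := k-1$ and $J := \lfloor D/w\rfloor$; for each $j\in\{0,1,\dots,J\}$ there are exactly $D - wj + 1$ valid choices of $i$, whence
$$ N \;=\; \sum_{j=0}^{J}\bigl(D - wj + 1\bigr) \;=\; (J+1)\Bigl(D + 1 - \tfrac{wJ}{2}\Bigr). $$
I would then bound each factor separately: $J \le D/w$ gives $D + 1 - \tfrac{wJ}{2} \ge \tfrac{D}{2} + 1 > 0$, and $\lfloor D/w\rfloor + 1 > D/w$ gives $J+1 > D/w$; multiplying (the second factor being positive) yields $N > \tfrac{D^2}{2w} + \tfrac{D}{w} = \tfrac{D(D+2)}{2w}$.

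To finish, substitute the defining estimate $D \ge \sqrt{2kn} - 1$ (valid since $2kn\ge 4$): then $D \ge \sqrt{2kn}-1 \ge 0$ and $D + 2 \ge \sqrt{2kn}+1$, so $D(D+2) \ge (\sqrt{2kn}-1)(\sqrt{2kn}+1) = 2kn - 1$, and hence $N > \tfrac{2kn-1}{2(k-1)}$. Since $\tfrac{2kn-1}{2(k-1)} - n = \tfrac{2n-1}{2(k-1)} > 0$, we conclude $N > n$, which is all we needed. There is no real obstacle here — the proof is a routine dimension count — so the only thing requiring care is the bookkeeping of the floor and the $\pm 1$ terms; this is precisely why the stated bound is $\sqrt{2kn}$ rather than something marginally smaller, and the leftover slack $\tfrac{2n-1}{2(k-1)}$ confirms the estimate is comfortably not tight.
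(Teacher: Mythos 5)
Your proof is correct and is exactly the ``easy adaptation'' of Lemma~\ref{lem:biv-interp} that the paper has in mind: count the monomials of $(1,k-1)$-weighted degree at most $\lfloor\sqrt{2kn}\rfloor$, check the count exceeds the number $n$ of homogeneous constraints $Q(a,r(a))=0$, and invoke linear algebra. The lattice-point bookkeeping (the sum $(J+1)(D+1-wJ/2)$, the bounds $J+1>D/w$ and $D+1-wJ/2\ge D/2+1$, and the final slack $\tfrac{2n-1}{2(k-1)}>0$) all check out, so there is nothing to add.
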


\begin{lemma}\label{lem:wt-biv-bezout}
Let $Q(X,Y)\in \F_q[X,Y]$ be a polynomial with $(1,k-1)$-weighted degree at most $D$. Further let $p(X)\in\F_q[X]$ be a degree $k-1$ polynomial. Then if the set of common zeroes of $Q(X,Y)$ and $Y-p(X)$ in $\F_q \times\F_q$, i.e., the set $Z:=\{(a,b)\in \F_q^2 \mid Q(a,b) = b-p(a) = 0\}$, has size greater than $D$ then $Y-P(X)$ divides $Q(X,Y)$. 
\end{lemma}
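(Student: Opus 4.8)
The plan is to run the proof of \Cref{lem:biv-bezout} essentially verbatim, substituting in the one place where the hypothesis differs: the degree bound on the univariate restriction $g(X) := Q(X,p(X))$. So the first step is to observe \emph{why} the $(1,k-1)$-weighted degree is the right notion here. Write $Q(X,Y) = \sum_{i,j} q_{ij} X^i Y^j$. Substituting $Y = p(X)$ sends the monomial $X^i Y^j$ to $X^i p(X)^j$, which (since $\deg p = k-1$) is a polynomial of degree at most $i + j(k-1)$ — exactly the $(1,k-1)$-weighted degree of the monomial $X^i Y^j$. Taking the maximum over all monomials with $q_{ij} \neq 0$, I get $\deg g \leq D$.

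Next, exactly as in \Cref{lem:biv-bezout}, for every $(a,b) \in Z$ we have $g(a) = Q(a,p(a)) = Q(a,b) = 0$, using $b - p(a) = 0$; and distinct elements of $Z$ have distinct $X$-coordinates, since $(a,b),(a,b') \in Z$ forces $b = p(a) = b'$. Therefore $g$ has at least $|Z| > D \geq \deg g$ distinct roots in $\F_q$, which forces $g \equiv 0$. Finally, view $Q$ as an element of $\F_q[X][Y]$ and divide by $Y - p(X)$, whose leading coefficient in $Y$ is $1$ (a unit), so the division algorithm applies over the ring $\F_q[X]$; the remainder is an element of $\F_q[X]$ equal to $Q(X,p(X)) = g(X) = 0$, hence $Y - p(X)$ divides $Q(X,Y)$ in $\F_q[X,Y]$.

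The only non-mechanical point — and hence the "hard part," though it is quite short — is the opening observation that the $(1,k-1)$-weighting of monomials is precisely engineered so that $\deg_X Q(X,p(X))$ is bounded by the weighted degree of $Q$; once that is in hand, the root-counting and the polynomial division steps are identical to the unweighted case and require no further ideas. (One should also note \Cref{lem:wt-biv-interp} is the matching existence statement: a non-zero $Q$ of $(1,k-1)$-weighted degree at most $\sqrt{2kn}$ exists because the number of monomials $X^iY^j$ with $i + j(k-1) \leq \sqrt{2kn}$ exceeds $n$, so the homogeneous linear system $\{Q(a,r(a)) = 0\}_{a \in S}$ has more unknowns than equations.)
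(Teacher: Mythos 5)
Your proof is correct and is exactly the adaptation of \Cref{lem:biv-bezout} that the paper asserts but does not write out: the one new observation — that substituting $Y=p(X)$ sends a monomial $X^iY^j$ to a polynomial of degree at most its $(1,k-1)$-weighted degree, so $\deg g\leq D$ — is the right one, and the root-counting and division steps then carry over verbatim. (Your choice to divide in $\F_q[X][Y]$ using that $Y-p(X)$ is monic in $Y$, rather than passing to $\F_q(X)[Y]$ as the paper's proof of \Cref{lem:biv-bezout} does, is a minor stylistic variant that if anything is slightly cleaner, since it yields divisibility in $\F_q[X,Y]$ directly.)
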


Now define the Weighted Degree Reed Solomon List-Decoder to be the modification of the Basic Reed Solomon List-Decoder with the only change being that in Step 1 it finds a polynomial $Q$ of $(1,k-1)$-weighted degree at most $\sqrt{2kn}$ that is zero on points $(a,r(a))$. From \Cref{lem:wt-biv-interp} and \Cref{lem:wt-biv-bezout} we immediately get the following theorem.

\begin{theorem}
    {\bf Weighted Degree RS List-Decoding Algorithm} can be implemented to run in polynomial time in $q$. It solves the Reed-Solomon list-decoding problem when $t > \sqrt{2kn}$. 
\end{theorem}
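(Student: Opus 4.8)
The plan is to follow the template of the proof of the Basic RS List-Decoding theorem essentially verbatim, substituting the weighted-degree lemmas (\Cref{lem:wt-biv-interp} and \Cref{lem:wt-biv-bezout}) for their unweighted counterparts. There are three things to establish: that Step 1 and Step 2 run in time polynomial in $q$; that every polynomial $p$ of degree less than $k$ agreeing with $r$ on at least $t$ points is placed in the output list; and that the output list is short enough that printing it is not itself a bottleneck.

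For efficiency, I would observe that Step 1 amounts to solving a homogeneous linear system: the unknowns are the coefficients $q_{ij}$ of the monomials $X^iY^j$ of $(1,k-1)$-weighted degree at most $\sqrt{2kn}$, and for each $a \in S$ the requirement $Q(a,r(a))=0$ is one linear equation in these unknowns. By the monomial count underlying \Cref{lem:wt-biv-interp} there are strictly more than $n$ such monomials, so with only $n$ equations Gaussian elimination produces a nonzero solution, in time polynomial in $n$, hence in $q$ since $n \le q$ and each field operation costs $\mathrm{poly}(\log q)$. Step 2 invokes one of the cited polynomial-time bivariate factorization algorithms over $\F_q$, and then scans the $O(\deg_Y(Q))$ irreducible factors, keeping those of the form $Y - p_i(X)$.

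For correctness, \Cref{lem:wt-biv-interp} guarantees Step 1 always succeeds. Fix any $p \in \F_q[X]$ of degree less than $k$ with $|\{a \in S : p(a) = r(a)\}| \ge t$. Every such point $a$ yields a common zero $(a,r(a)) = (a,p(a))$ of $Q(X,Y)$ and $Y - p(X)$, so the common-zero set $Z$ has size at least $t > \sqrt{2kn}$, which is at least the $(1,k-1)$-weighted degree of $Q$. (One minor point to check: \Cref{lem:wt-biv-bezout} is phrased for a polynomial $p$ of degree exactly $k-1$, but the substitution argument in its proof applies to any $\deg p \le k-1$, since $Q(X,p(X))$ still has degree at most the weighted degree of $Q$; alternatively one simply pads $p$ to degree $k-1$.) Hence \Cref{lem:wt-biv-bezout} gives $Y - p(X) \mid Q(X,Y)$, and since $Y - p(X)$ is monic and linear in $Y$ it is irreducible, so it occurs among the factors $Q_1,\dots,Q_\ell$ and $p$ is output.

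Finally, the output list has size at most the number of distinct irreducible factors of $Q$ that are monic and linear in $Y$, which is at most $\deg_Y(Q) \le \sqrt{2kn}/(k-1) = O(\sqrt{n/k})$; in particular it is polynomially bounded in $n$ (and comparable to the inclusion-exclusion bound mentioned above), so the algorithm does solve the Reed-Solomon List-Decoding Problem whenever $t > \sqrt{2kn}$. I do not expect a genuine obstacle here: the analysis is driven entirely by the two weighted-degree lemmas, and the only places demanding care are the monomial count feeding \Cref{lem:wt-biv-interp} (already supplied) and the harmless degree-versus-exactly-degree discrepancy in applying \Cref{lem:wt-biv-bezout}.
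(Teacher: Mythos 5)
Your proposal is correct and follows exactly the route the paper intends: the paper simply states that the theorem follows immediately from \Cref{lem:wt-biv-interp} and \Cref{lem:wt-biv-bezout} by mimicking the Basic RS List-Decoding analysis, and you have filled in precisely those details (the linear-system argument for Step 1, bivariate factorization for Step 2, the common-zero count triggering \Cref{lem:wt-biv-bezout}, and the $\deg_Y(Q)$ bound on the list size). Your side remark that \Cref{lem:wt-biv-bezout} applies verbatim to any $p$ of degree at most $k-1$, not just exactly $k-1$, is a valid and harmless observation.
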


This now gives an efficient list-decoder up to the inclusion-exclusion bound. In fact by carefully playing with the weighted degree (and modifying \Cref{lem:wt-biv-interp} appropriately), we can even get a ``perfect'' unique decoder (one that corrects $(n-k)/2$ errors), by setting the weighted degree bound to $(n+k)/2$, and restricting $Q(X,Y)$ to be of the form $A(X)\cdot Y + B(X)$. Since such a polynomial can only have one factor of the form $Y-p(X)$ the output list has size $1$ as required. This algorithm turns out to be equivalent to an algorithm due to Berlekamp and Welch~\cite{WelchB} from the 1980s (in particular following the exposition in \cite{GemmellS}). 

\subsection{Multiplicities}

The last step towards algorithmic achievement of the Johnson bound, i.e., improving requirement on the agreement parameter from $t > \sqrt{2kn}$ to the weaker $t > \sqrt{kn}$ turns out to involve a significantly more sophisticated algebraic idea, namely multiplicity of zeroes --- a notion with a rich history in algebraic geometry, and its applications in adjacent fields. We aim to present some self contained definitions below, though we will skip many of the proofs.

\begin{definition}[Multiplicity of a root for univariate polynomials]
We say that a polynomial $P(X)\in\F_q[X]$ has a root of multiplicity $m$ at $a\in \F_q$ if $P(X)$ is divisble by $(X-a)^m$. 
\end{definition}

We really need the notion extended to multivariate polynomials but based on our understanding of roots of multiplicity $1$, it should be clear that the condition will not be based on the factorization of the bivariate polynomial. So we first note some equivalent ways of expressing the multiplicity of a root of a univariate polynomial. Note that the condition above is equivalent to saying $X^m$ divides the polynomial $P(X+a)$ - but unfortunately this is still a condition about the divisors of a polynomial. So we reformulate this again to say that $a$ is a root of multiplicity $m$ of $P(X)$ if the polynomial $P(X+a)$ is not supported on monomials of degree less than $m$. This condition now extends naturally to bivariate polynomials as below.

\begin{definition}[Multiplicity of a root for bivariate polynomials]
We say that a polynomial $Q(X,Y)\in\F_q[X,Y]$ has a root of multiplicity $m$ at $(0,0)$ if $Q(X,Y)$ is not supported on any monomial of degree less than $m$. (I.e., if $Q(X,Y) = \sum_{ij} q_{ij}X^iY^j$ and $i+j < m$ then $q_{ij}=0$.) 
We say that $Q(X,Y)$ has a root of multiplicity $m$ at $(a,b)\in\F_q \times \F_q$ if $Q(X+a,Y+b)$ has a root of multiplicity $m$ at $(0,0)$.
\end{definition}

For reasons that are not completely obvious (and require following the numbers carefully) it turns out that rather finding the ``first'' polynomial $Q$ that is zero on the given set of points, namely $\{(a,r(a))\mid a\in S\}$ it is better to find a polynomial that has a zero of high multiplicity at all of the above points, even though this forces the degree of $Q$ to be higher. Such polynomials, it turns out, need fewer intersections with the polynomial $Y-p(X)$ to have the latter as a factor. Again we state lemmas without proofs that quantify these effects.

\begin{lemma}\label{lem:mult-biv-interp}
    For every function $f:S \to \F_q$ with $|S|=n$ and every positive integer $m$,  there exists a non-zero bivariate polynomial $Q(X,Y) \in \F_q[X,Y]$ with $(1,k-1)$-weighted degree at most $m\cdot \sqrt{(1+\frac1m) kn}$ such that $Q(X,Y)$ has a zero of multiplicity $m$ at $(a,r(a))$ for every $a\in S$.
\end{lemma}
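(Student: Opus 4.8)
The plan is to mimic the dimension-counting proof of \Cref{lem:biv-interp}, and of its weighted-degree refinement \Cref{lem:wt-biv-interp}, now accounting for the stronger demand that $Q$ vanish to multiplicity $m$ (rather than just multiplicity $1$) at each of the $n$ points. Fix a degree bound $D$ to be chosen at the end, and write $Q(X,Y) = \sum_{i + (k-1)j \le D} q_{ij} X^i Y^j$; the unknowns are the coefficients $q_{ij}$, and their number is exactly the number of lattice points $(i,j)$ with $i,j\ge 0$ and $i + (k-1)j \le D$. By the definition of multiplicity for bivariate polynomials given above, $Q$ has a zero of multiplicity $m$ at $(a,r(a))$ precisely when the shifted polynomial $Q(X+a,Y+r(a))$ is unsupported on every monomial $X^{i'}Y^{j'}$ with $i'+j' < m$. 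Since the coefficients of $Q(X+a,Y+r(a))$ are fixed linear functions of the $q_{ij}$, and the number of monomials of total degree less than $m$ in two variables is $\sum_{d=0}^{m-1}(d+1) = \binom{m+1}{2}$, imposing this condition at one point $a$ is a system of $\binom{m+1}{2}$ homogeneous linear equations in the $q_{ij}$. Ranging over all $a\in S$ gives at most $n\binom{m+1}{2}$ homogeneous linear constraints, and a nonzero $Q$ exists as soon as the number of unknowns strictly exceeds this.

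Next I would lower-bound the number of unknowns. Summing over admissible $j$, the count of monomials of $(1,k-1)$-weighted degree at most $D$ is $\sum_{j=0}^{\lfloor D/(k-1)\rfloor}\bigl(D - (k-1)j + 1\bigr)$; factoring the sum as $(J+1)\bigl[(D+1) - \tfrac{(k-1)J}{2}\bigr]$ with $J=\lfloor D/(k-1)\rfloor$ and using $(k-1)J \le D < (k-1)(J+1)$, this is bounded below by $\tfrac{D}{k-1}\cdot\tfrac{D}{2} = \tfrac{D^2}{2(k-1)}$ (essentially the area of the relevant right triangle). Hence it suffices to choose $D$ with $\tfrac{D^2}{2(k-1)} > n\binom{m+1}{2} = \tfrac{nm(m+1)}{2}$, i.e.\ $D^2 > (k-1)\,n\,m(m+1)$. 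Taking $D = m\sqrt{(1+\tfrac1m)kn}$ gives $D^2 = m^2(1+\tfrac1m)kn = m(m+1)kn > (k-1)\,n\,m(m+1)$, so the inequality holds with slack the factor $k/(k-1)$, which also absorbs the harmless rounding of $D$ down to an integer (exactly as in \Cref{lem:wt-biv-interp}). Therefore a nonzero $Q$ of $(1,k-1)$-weighted degree at most $m\sqrt{(1+\tfrac1m)kn}$ vanishing to multiplicity $m$ at every $(a,r(a))$ exists, as claimed.

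The only step requiring real care is the two-variable bookkeeping: verifying that imposing a zero of multiplicity $m$ at a point genuinely amounts to (at most) $\binom{m+1}{2}$ homogeneous linear conditions on the coefficient vector of $Q$ — which is where the reformulation of multiplicity via the coordinate shift $(X,Y)\mapsto(X+a,Y+r(a))$ is used — together with getting the monomial lower bound $\tfrac{D^2}{2(k-1)}$ clean enough that it lines up with the target bound. Everything else is the same linear-algebra pigeonhole already used for \Cref{lem:biv-interp}, so I do not anticipate a genuine obstacle; all the content is in the counting, and the factor $(1+\tfrac1m)$ is precisely what makes $m^2(1+\tfrac1m) = m(m+1) = 2\binom{m+1}{2}$ match the per-point constraint count against the $\approx D^2/(2k)$ available monomials.
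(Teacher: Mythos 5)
Your proof is correct and follows exactly the dimension-counting approach the paper sketches (the paper omits the full proof but hints that requiring a root of multiplicity $m$ imposes $\binom{m+1}{2}$ homogeneous linear constraints, which is precisely the crux of your argument). The constraint count $n\binom{m+1}{2}$, the monomial lower bound $D^2/(2(k-1))$, and the choice $D = m\sqrt{(1+\tfrac1m)kn}$ all check out, and the integrality issues are harmless since $\lfloor x\rfloor + 1 \ge x$ already makes the monomial-count lower bound valid for real $D$.
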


(While we skip the proof, a hint is that this follows from the fact that requiring some point $(a,b)$ to be a root of multiplicity $m$ of $Q$ imposes  $\binom{m+1}2$ linear constraints on the coefficients of $Q$.) 

\begin{lemma}\label{lem:mult-biv-bezout}
Let $Q(X,Y)\in \F_q[X,Y]$ be a polynomial with $(1,k-1)$-weighted degree at most $D$. Further let $p(X)\in\F_q[X]$ be a degree $k-1$ polynomial. Then if the set of zeroes of $Y-p(X)$ contains more than $D/m$ points in common with the roots of multiplicity $m$ of $Q$, $Y-p(X)$ divides $Q(X,Y)$. 
\end{lemma}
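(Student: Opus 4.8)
The plan is to follow the template of \Cref{lem:biv-bezout}, replacing the plain root‑counting step with one that is sensitive to multiplicities. As before, I would introduce the univariate restriction $g(X) := Q(X,p(X))$. The $(1,k-1)$-weighted degree hypothesis says that every monomial $X^i Y^j$ with a nonzero coefficient in $Q$ satisfies $i + j(k-1) \le D$; substituting $Y = p(X)$ turns such a monomial into $X^i p(X)^j$, a polynomial of degree at most $i + j\deg(p) = i + j(k-1) \le D$. Hence $\deg g \le D$.

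The heart of the proof is the claim that if $Q$ has a root of multiplicity $m$ at $(a,p(a))$, then $(X-a)^m$ divides $g(X)$. To establish this I would shift coordinates. Put $b = p(a)$ and $\tilde p(X) := p(X+a) - b$, so $\tilde p$ has zero constant term, i.e.\ $X \mid \tilde p(X)$. By definition of multiplicity, $Q(X+a,Y+b) = \sum_{i+j \ge m} c_{ij} X^i Y^j$. Then
$g(X+a) = Q(X+a, p(X+a)) = Q\bigl(X+a,\, b + \tilde p(X)\bigr) = \sum_{i+j\ge m} c_{ij}\, X^i\, \tilde p(X)^j$,
and each summand is divisible by $X^i \cdot X^j = X^{i+j}$, hence by $X^m$ since $i+j \ge m$. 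Therefore $X^m \mid g(X+a)$, equivalently $(X-a)^m \mid g(X)$.

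With the claim in hand the lemma follows quickly. Every common point of $\{Y = p(X)\}$ and the multiplicity‑$m$ roots of $Q$ has the form $(a,p(a))$, and two such points with the same first coordinate $a$ must coincide; so if there are more than $D/m$ of them, the distinct values $a$ number more than $D/m$, and $\prod_a (X-a)^m$ divides $g(X)$. This forces $\deg g > m\cdot(D/m) = D$ unless $g \equiv 0$, and since $\deg g \le D$ we conclude $g(X) = Q(X,p(X)) = 0$. Finally, viewing $Q$ as an element of $\F_q(X)[Y]$, the element $p(X)$ is a root of it, so by the division algorithm over the field $\F_q(X)$ the monic (in $Y$) linear polynomial $Y - p(X)$ divides $Q(X,Y)$ there; clearing denominators exactly as in \Cref{lem:biv-bezout} yields divisibility in $\F_q[X,Y]$.

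I do not anticipate a real obstacle here: the one step that needs genuine care is the claim that a multiplicity‑$m$ zero of $Q$ at $(a,p(a))$ descends to a multiplicity‑$m$ zero of the restriction $g$ at $a$, and this reduces to the coordinate shift together with the observation $X \mid \tilde p(X)$. Everything else is bookkeeping with the weighted‑degree bound and a degree count, just as in the two earlier B\'ezout‑type lemmas.
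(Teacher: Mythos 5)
The paper deliberately omits a proof of \Cref{lem:mult-biv-bezout} (it says only ``we state lemmas without proofs that quantify these effects''), so there is no in-text argument to compare against; I will just assess your proof directly.

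Your proof is correct, and it is the natural completion that the paper has in mind. The weighted-degree hypothesis gives $\deg g\le D$ for $g(X)=Q(X,p(X))$ exactly as in \Cref{lem:biv-bezout}. The one step that genuinely needs care is the claim that a multiplicity-$m$ zero of $Q$ at $(a,p(a))$ descends to a zero of order at least $m$ of $g$ at $a$, and your coordinate-shift argument handles it cleanly: writing $\hat Q(X,Y)=Q(X+a,Y+b)=\sum_{i+j\ge m}c_{ij}X^iY^j$ and using $X\mid\tilde p(X)=p(X+a)-p(a)$, every surviving term of $\hat Q(X,\tilde p(X))$ is divisible by $X^{i+j}$, hence by $X^m$. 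The observation that distinct common zeroes of $Y-p(X)$ and $Q$ have distinct $X$-coordinates is also needed and you state it; from there the degree count $\deg g>m\cdot(D/m)=D$ forces $g\equiv 0$, and the passage from $Q(X,p(X))\equiv 0$ to $(Y-p(X))\mid Q$ is the same as in \Cref{lem:biv-bezout}. One small stylistic remark: you do not actually need to ``clear denominators'' at the end. Because $Y-p(X)$ is monic as a polynomial in $Y$, the division algorithm can be carried out directly over $\F_q[X][Y]$ and the quotient automatically has coefficients in $\F_q[X]$; the paper's proof of \Cref{lem:biv-bezout} elides this in the same way, so this is not a gap, just a slightly roundabout phrasing.
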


Combining the two lemmas together and letting $m$ grow large enough gives the following theorem (due to Guruswami and Sudan~\cite{GuruswamiS98}). 

\begin{theorem}\label{thm:rs-gs}
    {\bf Weighted Degree RS List-Decoding Algorithm} can be implemented to run in polynomial time in $q$. It solves the Reed-Solomon list-decoding problem when $t > \sqrt{kn}$. 
\end{theorem}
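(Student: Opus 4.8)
The plan is to run the Weighted Degree RS List-Decoding Algorithm but with the interpolation in Step~1 strengthened using \Cref{lem:mult-biv-interp}: find a non-zero $Q(X,Y)\in\F_q[X,Y]$ of $(1,k-1)$-weighted degree at most $D:=\bigl\lfloor m\sqrt{(1+\tfrac1m)kn}\bigr\rfloor$ having a zero of multiplicity $m$ at $(a,r(a))$ for every $a\in S$, where $m$ is a parameter to be chosen; Step~2 (factor $Q$ and output every $p_i$ with $Y-p_i(X)$ an irreducible factor) is unchanged. With this setup correctness is immediate from the two multiplicity lemmas, and the remaining work splits into (i) a quantitative choice of $m$ and (ii) a running-time check.

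For correctness I would argue exactly as in the basic case. \Cref{lem:mult-biv-interp} guarantees a valid $Q$ exists, so Step~1 never fails. If $p$ is a degree-$<k$ polynomial agreeing with $r$ on a set of at least $t$ points, then each such point $(a,r(a))=(a,p(a))$ lies on $Y-p(X)$ and is a root of multiplicity $m$ of $Q$, so \Cref{lem:mult-biv-bezout} forces $Y-p(X)\mid Q(X,Y)$ provided $t>D/m$. Since $Y-p(X)$ is irreducible, it is then one of the factors $Q_i$ and $p$ is placed in the output list. Hence it suffices to ensure $t>D/m$.

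The one genuinely quantitative point — and the place where I expect to have to be a little careful — is choosing $m$ so that $t>D/m$ while keeping $m$ polynomially bounded. Since $D/m\le\sqrt{(1+\tfrac1m)kn}$, it is enough to have $(1+\tfrac1m)kn<t^2$, i.e.\ $m>\frac{kn}{t^2-kn}$. This is where the hypothesis $t>\sqrt{kn}$ enters: because $t,k,n$ are integers, $t^2>kn$ forces $t^2-kn\ge 1$, so $\frac{kn}{t^2-kn}\le kn$ and, e.g., $m=kn+1$ works. In particular $m=O(n^2)$, and since $k\le n\le q$ this is $\mathrm{poly}(q)$, so every quantity that scales with $m$ stays polynomial in $q$.

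Finally I would verify efficiency. Step~1 amounts to finding a non-zero solution of a homogeneous linear system over $\F_q$ whose unknowns are the coefficients of the monomials $X^iY^j$ with $i+(k-1)j\le D$ (there are $O(D^2/k)=O(m^2 n)$ of them) and whose equations are the $\binom{m+1}{2}$ linear constraints per point $a\in S$ encoding the multiplicity-$m$ vanishing of $Q$ at $(a,r(a))$ (there are $O(m^2 n)$ of these); since a non-zero solution exists by \Cref{lem:mult-biv-interp}, Gaussian elimination produces one in $\mathrm{poly}(q)$ time. Step~2 factors a bivariate polynomial over $\F_q$, which runs in $\mathrm{poly}(q)$ time by the cited factorization results, and returns at most $\deg_Y(Q)\le D/(k-1)=O\bigl(m\sqrt{n/k}\bigr)$ factors, so the output list has polynomial size as well. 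Assembling these pieces gives the theorem; the only substantive ingredients are the two multiplicity lemmas, whose proofs — that multiplicity-$m$ vanishing imposes $\binom{m+1}{2}$ linear constraints, and the multiplicity refinement of the one-variable root count — are the real heart of the argument.
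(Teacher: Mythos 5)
Your proof is correct and takes essentially the same approach the paper intends: the paper's ``proof'' is just the single sentence ``Combining the two lemmas together and letting $m$ grow large enough gives the following theorem,'' and your write-up fills in exactly the omitted quantitative details (choosing $m=kn+1$ via integrality of $t,k,n$ so that $t>\sqrt{(1+\tfrac1m)kn}\ge D/m$, and checking the linear system and output list stay polynomial in $q$).
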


\section{Achieving ``capacity'': Folded-Reed Solomon Codes}

While the final theorem above (\Cref{thm:rs-gs}) achieves the best known bounds on the list-decodability of Reed-Solomon codes, they are not optimal in terms of their relationship between their rate and list-decoding radius (the fraction of errors that can be list-decoded with polynomial sized lists). The following theorem describes this optimal behavior for codes:

\begin{theorem}\label{thm:comb-list-dec-cap}
    For every $p \in [0,1]$ and $\epsilon > 0$ the following hold:
    \begin{enumerate}
        \item There exists a $q_0 = q_0(\epsilon)$ and a polynomial $L_\epsilon(\cdot)$ such that for every $q \geq q_0$ and every sufficiently large $n$, there is an alphabet $\Sigma$ with $|\Sigma|=q$ and $k \geq (1-p-\epsilon)n$ and an encoding function $E:\Sigma^k \to \Sigma^n$ that is $(p,L_\epsilon(n))$-list-decodable.
        \item For every polynomial $L(n)$ and every sufficiently large $n$ and for every $q$ if an encoding function $E:\Sigma^k\to\Sigma^n$ has a $(p,L(n))$-list-decoder then $k \leq (1-p+\epsilon)n$. 
    \end{enumerate}
\end{theorem}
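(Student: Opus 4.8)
The plan is to handle the two halves of the theorem by completely different methods. The second part --- the upper bound $k\le(1-p+\epsilon)n$ --- is essentially a restatement of the second part of \Cref{thm:php}: the pigeonhole argument used there never invoked the hypothesis that $q(\cdot)$ grows polynomially and works for any fixed alphabet with $q\ge2$, so I would simply reprise it. The first part --- existence of near-capacity list-decodable codes --- is the substance, and I would prove it by the probabilistic method, taking $E$ to be a uniformly random function $\Sigma^k\to\Sigma^n$; it will in fact emerge that $L_\epsilon$ can be taken to be a \emph{constant} of size $O(1/\epsilon)$, not merely a polynomial.

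For the second part, suppose for contradiction that $E\colon\Sigma^k\to\Sigma^n$ has a $(p,L(n))$-list-decoder $D$ but $k>(1-p+\epsilon)n$. Let $\ell=\lceil(1-p)n\rceil$ and let $\pi\colon\Sigma^n\to\Sigma^\ell$ project onto the first $\ell$ coordinates. Since $\pi\circ E$ maps a domain of size $q^k$ into a range of size $q^\ell$, some $z\in\Sigma^\ell$ has at least $q^{k-\ell}\ge q^{\epsilon n-1}$ preimages, and $q^{\epsilon n-1}>L(n)$ for all large $n$ because $q\ge2$ and $L$ is polynomial; fix $L+1$ distinct preimages $m_1,\dots,m_{L+1}$. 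Choosing a symbol $0\in\Sigma$ and setting $r=(z,0,\dots,0)\in\Sigma^n$, the channel $\ch$ that overwrites the last $n-\ell=\lfloor pn\rfloor$ coordinates with $0$ is $p$-bounded and maps every $E(m_i)$ to $r$. As $|D(r)|\le L$, some $m_i\notin D(r)=D(\ch(E(m_i)))$, contradicting $(p,L)$-list-decodability; hence $k\le(1-p+\epsilon)n$.

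For the first part, I would first reduce list-decodability to a packing condition: $E\colon\Sigma^k\to\Sigma^n$ admits a $(p,L)$-list-decoder whenever $|\{m\colon\delta(E(m),r)\le p\}|\le L$ for every $r\in\Sigma^n$, since one can let $D(r)$ enumerate those messages (padded to length $L$) and a $p$-bounded channel $\ch$ always keeps $E(m)$ within distance $p$ of $\ch(E(m))$. Then set $k=\lceil(1-p-\epsilon)n\rceil$, let $E$ choose its $q^k$ codewords independently and uniformly from $\Sigma^n$, and union-bound the failure of the packing condition over $r\in\Sigma^n$ and over the $\binom{q^k}{L+1}$ choices of $L+1$ distinct messages. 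For a fixed such tuple the codewords are i.i.d.\ uniform, so they all lie in the radius-$pn$ ball around $r$ with probability $(\mathrm{Vol}(pn)/q^n)^{L+1}$, where $\mathrm{Vol}(pn)=\sum_{i\le pn}\binom{n}{i}(q-1)^i$; hence
\[
\Pr[\text{packing fails}]\;\le\;q^n\cdot\binom{q^k}{L+1}\cdot\Bigl(\tfrac{\mathrm{Vol}(pn)}{q^n}\Bigr)^{\!L+1}\;\le\;q^{\,n+k(L+1)+(H_q(p)-1)(L+1)n},
\]
using $\binom{q^k}{L+1}\le q^{k(L+1)}$ and the standard volume bound $\mathrm{Vol}(pn)\le q^{nH_q(p)}$ for the $q$-ary entropy function $H_q$ (valid for $p\le1-\tfrac1q$, which we ensure by taking $q$ large).

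The decisive input is that $H_q(p)\to p$ as $q\to\infty$, uniformly in $p$: in $H_q(p)=p\log_q(q-1)+\tfrac1{\ln q}\bigl(-p\ln p-(1-p)\ln(1-p)\bigr)$ one has $\log_q(q-1)\to1$ and $\tfrac1{\ln q}\to0$, so there is $q_0=q_0(\epsilon)$ with $H_q(p)\le p+\epsilon/2$ for all $q\ge q_0$. Substituting this and $k\le(1-p-\epsilon)n$, the exponent above is at most $n+(L+1)\bigl(k-(1-p-\epsilon/2)n\bigr)\le n-(\epsilon/2)(L+1)n$, which is negative as soon as $L+1>2/\epsilon$; so $L_\epsilon:=\lceil2/\epsilon\rceil$ works, a random code avoids the bad event with positive probability, and the corresponding encoding is $(p,L_\epsilon(n))$-list-decodable with $k\ge(1-p-\epsilon)n$. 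I expect the only genuinely delicate point to be getting this $q$-ary entropy / Hamming-ball estimate and the $q\to\infty$ limit right with clean constants; the reduction, the first-moment calculation, and the pigeonhole argument of the second part are all routine. Two caveats worth flagging: in the degenerate range $p+\epsilon\ge1$ one simply takes $k=0$, making the claim vacuous, and the decoder obtained this way is brute-force search rather than efficient --- which is acceptable, since \Cref{thm:comb-list-dec-cap} only asserts existence.
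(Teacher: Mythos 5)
Your proposal matches the route the paper itself sketches (it omits the full proof): Part~2 is obtained by re-running the pigeonhole argument from Part~2 of \Cref{thm:php} — correctly noting that the argument never used the polynomial-growth hypothesis on $q$ — and Part~1 is the probabilistic method via a first-moment union bound over received words and $(L+1)$-tuples of messages, using the $q$-ary-entropy volume estimate and the fact that $H_q(p)\to p$ as $q\to\infty$. The details are handled correctly, and your constant list size $L_\epsilon=O(1/\epsilon)$ is in fact sharper than the $O(n)$ bound the paper alludes to.
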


We omit the proof of the theorem but note the following: The first part can be proved relatively easily by the probabilistic method, specifically, by picking $E$ at random and proving that the code is $(p,O(n))$-list-decodable with high probability (and so in particular, a code like this exists). And the lower bound (Part 2) is immediate from Part (2) of \Cref{thm:php}.

In simpler terms the theorem above says that encoding functions of rate strictly below $1-p$  that can be list-decoded from $p$-fraction of errors with polynomial sized lists exist over sufficiently large (constant-sized) alphabets. And such functions of rate greater than $1-p$ do not exist, over any alphabet! Thus $1-p$ is the limit of the rate for list-decoding from $p$ fraction errors. This limit is called the {\em list-decoding capacity} for $p$-fraction errors, and families of codes approaching this rate in the limit are called capacity-achieving codes.\footnote{We note that the term ``capacity'' has a rich and loaded history in information theory, and is traditionally applied to channels that inject error stochastically. In this section we are applying this term to a channel that injects error adversarially with a relaxed notion of decoding, namely list-decoding. Despite the differences we believe the use of the phrase is appropriate and consistent with the spirit of the term as used in information theory.} 

However Part (1) of the theorem above is non-algorithmic in two aspects: First, the encoders guaranteed to exist are not explicit and in particular not given by a polynomial time computable function. Next, the decoders are also not polynomial time computable. Achieving either of these features remained an open question till 2006, until a breakthrough work of Guruswami and Rudra~\cite{GuRu} in 2006 (which in turn built on a brilliant advance of Parvaresh and Vardy~\cite{ParvareshV}). In subsequent years the analysis of these codes has been simplified further and we follow the analysis of Guruswami and Wang~\cite{GuruswamiW13}.

\paragraph{The ideas:} The key element that allowed the list-decoder for Reed-Solomon codes to work is that by using two variables, one for the coordinates of the code, and one for the values of the codewords at the coordinate, we could get a non-trivially low-degree polynomial $Q(X,Y)$ that explains the graph of the function $(a,f(a))_{a \in S}$. Parvaresh and Vardy, in turn inspired by Coppersmith and Sudan~\cite{CoSu}, suggested adding more information about $f$ to the encoding. For instance, one could send along values of $g(a) = f^2(a)$ in addition to sending $f(a)$ for every $a \in S$. Or perhaps we could use $g(a) = f'(a)$ where $f'(X)$ is the derivative of $f(X)$. Why might this help? We now we get to find a trivariate polynomial $Q(X,Y,Z)$ such that $Q(a,f(a),g(a))=0$ for every $a \in S$ and the addition of the new variable reduces the degree even further (to something growing like $\sqrt[3]{n}$). However one has to be careful in doing so: The polynomial $Q$ is forced to find some simple relationship among the three coordinates of the input triples, but might find a useless one. For example if we used $g(X) = f(X)^2$ and there were no errors in communication, then the polynomial $Q(X,Y,Z)$ might simply be $Z - Y^2$ and this would be zero everywhere and contain no information about $f$! Finding the right function $g$ (and later we'll select a whole sequence of functions $g_1,g_2,\ldots$) turns out to be non-trivial. Eventually the choice we will make is to use $g_1(X) = f(\omega X)$ and $g_i(X) = \omega^i X$ where $\omega$ is a primitive element in $\F_q$. (I.e., $\omega^i = 1$ if and only if $i = 0 \pmod{q-1}$.) The exact reason for this choice remains somewhat a mystery though we will comment on this later.

But before explaining the choice above, we mention a new hitch with the scheme above that some readers may have already noticed. The choice of adding more information to the encoding does have a cost: It reduces the rate of the underlying code. For instance if $f(X)$ is a polynomial of degree $k < n$ and we choose to transmit both $f(a)$ and $g(a)$ for every $a \in S$, then the rate of the code will be at most $1/2$. Dealing with this seems challenging initially but we end up resolving it by a conceptual solution rather than an algebraic one. We won't describe the exact resolution yet, but alert the reader that this is the reason for considering both the ``overlapping'' and the ``non-overlapping'' versions of the codes below.

\paragraph{(Overlapping) FRS Encoder:} The codes we will work with are called Folded Reed-Solomon (FRS). These are codes with a parameter $s$ and have $\Sigma = \F_q^s$ as their alphabets. 

\begin{definition}[(Overlapping) Folded Reed Solomon (FRS) Codes:]
    An overlapping FRS code is specified by integers $s$, $k$, field $\F_q$, primitive element $\omega \in \F_q$, and a set $S \subseteq \F_q$ with $|S|=n$. The encoding function $E = E^{\mathrm FRS}_{s,k,\F_q,\omega,S}$ maps $\F_q^k \to \{f: S\to \Sigma\}$ where $\Sigma = \F_q^s$ as follows: Given a message $m = (m_0,\ldots,m_{k-1}) \in \F_q^k$ and $i \in \{0,\ldots,s-1\}$ let $g_i = g_i^{(m)}$ be the polynomial $g_i(X) = \sum_{j=0}^{k-1} m_j (\omega^i X)^j$. Then $E(m)$ is the function $f:S\to \F_q^s$ given by $f(a) = (g_i(a))_{0\leq i < s}$. 

    If the sets $T_\alpha = \{\alpha\cdot \omega^i | 0 \leq i < s\}$ are disjoint for every pair $\alpha \ne \beta \in S$ then we say that the above code is a non-overlapping FRS code (or simply an FRS code). 
\end{definition}

First note that the Folded Reed-Solomon code is simply a Reed-Solomon code where we've ``bundled'' some of the coordinates together: Specifically we can think of $g_0(X)$ as the message polynomial, and the encoding is simply the evaluations of $g_0$ on the set $S$, and then on the set $\omega\cdot S := \{\omega\cdot \alpha | \alpha \in S\}$, and then on $\omega^2\cdot S$, etc.
However we ``bundle'' the output in a special way by collecting the terms $(g_0(a), g_0(\omega a), \ldots,g_0(\omega^{s-1} a))$ together and calling it a single symbol of the alphabet. 

So how does this bundling affect the decodability? The answer is that bundling and the resulting change in the alphabet from $\F_q$ to $\F_q^s$ changes the power of the channel, which is now restricted in the kinds of errors it can make. (Indeed this is the whole difference between channels over say the alphabet $\F_2$ and $\F_{2^t}$ for large $t$, and this is why channels working with large alphabets are list-decodable from fractions of errors tending to $1$). 

Returning to the codes above, if the bundles used in the codes (we refer to the sets $T_\alpha$ as bundles) overlap then this leads to an overlapping FRS, while if they don't we get a non-overlapping FRS. Overlaps clearly seem wasteful --- they involve transmitting the same information about the message multiple times and this can't be a great use of the channel. However, overlapping FRSes are good to study since 
they admit a non-trivially strong decoder. Later we will see how to reduce the decoding problem for a high rate non-overlapping FRS code to decoding a poor rate overlapping FRS codeword. Together the two steps help us achieve capacity, with judicious choice of parameters. 

\paragraph{Decoding Overlapping FRS:}

We quickly recap the decoding problem. We are given access to a function $r:S \to \Sigma$ and we would like to output all polynomials $p \in \F_q[X]$ of degree less than $k$ such that $|\{a \in S | r(a) = (g_i(a))_{0 \leq i < s} \}| \geq t$, where $g_i(X) = p(\omega^i X)$. (Of course the algorithm is also given the specifications of the code, namely $s,k,\F_q,\omega$ and $S$.) 

\begin{tabbing}
\quad\= {\bf Overlapping FRS List-Decoding Algorithm}\\
\quad Step 1 \quad\= Compute a non-zero polynomial $Q(X,Y_1,\ldots,Y_s) = A_0(X) + \sum_{i=1}^s A_i(X) Y_i$ \\
\quad \quad \quad \quad \quad \quad with $\deg(A_0) < \frac{n-k}{s+1} + k$ and 
$\deg(A_i)\leq \frac{n-k}{s+1}$ for every $1\leq i \leq s$ \\
\quad \quad \quad \quad \quad \quad s.t. $Q(a,r(a))=0$ for all $a\in S$\\
\quad Step 2 \quad\= Find all polynomials $p \in \F_q[X]$ such that $\deg(p)<k$ and $Q(X,g_0(X),\ldots,g_{s-1}(X))=0$, \\
\quad \quad \quad \quad \quad \quad where $g_i(X) = p(\omega^i X)$ for every $1\leq i \leq s$. \\
\quad \quad \quad \quad \quad \quad and output a list of all such polynomials.
\end{tabbing}

The following lemma states the efficacy of the above algorithm.

\begin{lemma}\label{lem:overlap-FRS-decoder}
    The {\bf Overlapping FRS List-Decoding Algorithm} can be implemented to run in time $O(n^3\cdot q^s)$. It outputs a list including all polynomials $p$ of degree less than $k$ such that $|\{a \in S | r(a) = (g_i(a))_{0 \leq i < s} \}| \geq t:= \frac{n-k}{s+1} + k$, where $g_i(X) = p(\omega^i X)$. The list has size at most $O(q^{s})$.
\end{lemma}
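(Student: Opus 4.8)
The plan is to mirror the structure of the Basic and Weighted-Degree RS list-decoders, now in $s+1$ variables but with the polynomial $Q$ restricted to be \emph{linear} in $Y_1,\ldots,Y_s$. First I would establish the analogue of \Cref{lem:biv-interp}: a non-zero $Q(X,Y_1,\ldots,Y_s) = A_0(X) + \sum_{i=1}^s A_i(X)Y_i$ with the stated degree bounds exists vanishing on all $n$ points $(a,r(a))$. This is again a homogeneous linear system: the number of free coefficients is $(\deg(A_0)+1) + s\cdot(\frac{n-k}{s+1}+1)$, which one checks exceeds $n$ by the choice $\deg(A_0) < \frac{n-k}{s+1}+k$ (roughly $\frac{n-k}{s+1}+k + s\cdot\frac{n-k}{s+1} = (n-k)+k = n$, with the strict inequalities and $+1$'s giving the needed slack), so a nonzero solution exists. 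Step 1 is then a linear-algebra computation over $\F_q$, costing $O(n^3)$ field operations (Gaussian elimination on an $n\times O(n)$ system), which accounts for the $n^3$ in the running time.

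Next I would prove the "root-finding'' correctness, the analogue of \Cref{lem:biv-bezout}: if $p$ has degree $<k$ and agrees with $r$ on a set of $\geq t = \frac{n-k}{s+1}+k$ points of $S$, then $Q(X,g_0(X),\ldots,g_{s-1}(X)) \equiv 0$ where $g_i(X)=p(\omega^i X)$. The argument is: the univariate polynomial $R(X) := Q(X,g_0(X),\ldots,g_{s-1}(X)) = A_0(X) + \sum_{i=1}^s A_i(X)\,p(\omega^{i-1}X)$ has degree at most $\max\{\deg A_0,\ \frac{n-k}{s+1} + (k-1)\} < \frac{n-k}{s+1}+k = t$ (here $\deg(p(\omega^{i-1}X)) = \deg p \le k-1$). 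But for every $a\in S$ with $r(a) = (g_i(a))_{0\le i<s}$ we have $R(a) = Q(a, r(a)) = 0$ by the interpolation property of Step 1. So $R$ has at least $t$ roots but degree $<t$, forcing $R\equiv 0$. Hence every such $p$ is found in Step 2.

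Step 2 itself I would handle by brute force over the $q^s$ candidate tuples $(g_0,\ldots,g_{s-1})$ consistent with the code structure — equivalently, brute force over the $q^s$ choices... actually more carefully: one enumerates candidate polynomials $p$ of degree $<k$ by noting that the equation $R(X)\equiv 0$ is linear in the coefficients of $p$ once we fix nothing — wait, it is \emph{not} linear because $A_i(X)p(\omega^{i-1}X)$ is linear in $p$, so in fact $R(X)\equiv 0$ is a homogeneous linear system in the $k$ coefficients of $p$. Its solution space is an $\F_q$-subspace; to extract the \emph{polynomials of degree $<k$} lying in it we simply enumerate this affine... no, it is a linear subspace, and I claim its dimension is at most $s$ (because $Q$ is linear in the $Y_i$ and nonzero, the map $p \mapsto R$ has small kernel — one shows via a Wronskian-type / linear-independence argument that $\{p(X), p(\omega X),\ldots,p(\omega^{s-1}X)\}$ cannot all be killed unless $p$ lies in a space of dimension $\le s$). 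Enumerating a subspace of dimension $\le s$ over $\F_q$ costs $O(q^s)$, checking each takes $O(n)$, giving the list size bound $O(q^s)$ and the claimed running time $O(n^3 q^s)$.

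The main obstacle is the bound on the size of Step 2's solution space (and hence the list): one must argue that the linear system $A_0(X) + \sum_{i=1}^s A_i(X)p(\omega^{i-1}X) = 0$, with $(A_0,\ldots,A_s)$ fixed and nonzero, has solution space of dimension at most $s$ in $p$. The clean way is to pick $j^\*$ maximal with $A_{j^\*}\not\equiv 0$; then modulo an appropriate reduction the leading behavior in $p(\omega^{j^\*-1}X)$ is determined by the lower $g_i$'s, and one bootstraps — but making this fully rigorous requires care with the action of $X\mapsto\omega X$ and the fact that $\omega$ is primitive (so $\omega^i\ne 1$ for $0<i<q-1$, ensuring the substitutions are "independent enough''). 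Alternatively, and more simply for a survey, one can just take the crude bound that the solution space has dimension $\le k$, giving list size $\le q^k$ — but that is not polynomial, so the $O(q^s)$ claim genuinely needs the dimension-$\le s$ argument, which is the one nontrivial point I would spell out in full.
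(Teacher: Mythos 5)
Your Step~1 interpolation count and the degree-vs-roots argument showing $R(X):=Q(X,g_0(X),\ldots,g_{s-1}(X))\equiv 0$ both track the paper's proof exactly, and you correctly isolate the one genuinely nontrivial point: bounding by $s$ the dimension of the solution set of $R\equiv 0$ in the coefficients of $p$. (Minor slip: that system is affine, not homogeneous linear, because of the $A_0(X)$ term --- the paper notes this explicitly --- though it does not affect the dimension bound.) You are right to flag your sketch of the dimension bound as incomplete, and the direction you gesture toward is the less promising one: all of the substituted polynomials $p(\omega^i X)$ have the same degree as $p$, and $A_1,\ldots,A_s$ share the same degree bound, so picking the maximal $j^*$ with $A_{j^*}\not\equiv 0$ gives you no natural ``leading term'' to peel off. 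The paper instead works from low-order coefficients upward: solve iteratively for $p_i(X) := p(X)\bmod X^{i+1}$ by imposing $R(X)\equiv 0\pmod{X^{i+1}}$. At stage $i$ the constraint on the new coefficient $c_i$ of $X^i$ takes the form $T(\omega^i)\,c_i=\beta_i$, where $\beta_i$ is determined by $p_{i-1}$ and $Q$, and $T$ is a \emph{fixed} nonzero polynomial of degree at most $s$ depending only on $Q$ (essentially $T(Z)=\sum_{j=1}^{s}A_j(0)\,Z^{j-1}$, after a suitable normalization). Thus $c_i$ is forced unless $T(\omega^i)=0$; since $\omega$ is primitive and $k\le q-1$, the values $\omega^0,\ldots,\omega^{k-1}$ are distinct, so $T$ vanishes at no more than $s$ of them. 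Hence at most $s$ of the $c_i$ remain free, giving dimension $\le s$ and the claimed $O(q^s)$ list size and $O(n^3 q^s)$ running time.
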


Before giving a sketch of the proof of this lemma, we first make some comments. One surprising aspect of the lemma is that it proves a purely information-theoretic fact --- that the output list size is polynomially bounded for when the number of agreements is at least  $t= \frac{n-k}{s+1} + k$ --- a fact that we do not know how to prove without analyzing an efficient algorithm to find the list! (In the case of Reed-Solomon codes, there was an analogous effect --- the efficient list-decoding algorithm proved an upper bound on the list-size. The difference there was the bounds were obtainable by simpler proofs --- and this is not the case here, as far as we know.)

The bound on the number of agreements however is not as good as we may like. Note that the rate of the code is $k/(ns)$ and a ``capacity achieving'' code should have been able to decode from $k/s + o(n)$ agreements, and we are not getting to that regime here. This is an inherent weakness of working with overlapping FRS codes (and the lemma applies to those). We will soon see how to adapt this algorithm for non-overlapping FRS codes via a rate improving reduction, and that will take us to capacity.

\begin{proof}[Proof Sketch:]
The analysis of Step 1 is straightforward given the analysis of similar steps in the RS list-decoding algorithms. Specifically we note that solving for $Q$ requires finding a non-zero solution to a homogeneous linear system, which is guaranteed to have a solution since the number of variables (i.e., the total number of coefficients of $A_0,\ldots,A_s$) is more than $n$, the number of constraints. Thus we turn to Step 2.

Here we first note that if a list satisfying the conditions of Step 2 can be found (efficiently) then it must include every solution $p$ of degree less than $k$ such that $|\{a \in S | r(a) = (g_i(a))_{0 \leq i < s} \}| \geq t:= \frac{n-k}{s+1} + k$, where $g_i(X) = p(\omega^i X)$. 
%To see this we note that the polynomial $R(X) = Q(X,g_0(X),\ldots,g_{s-1}(X))$ is a polynomial of degree less than $t$, while every $a$ such that $r(a) = (g_i(a))_{0 \leq i < s}$ is a zero of $R$. Thus $R$ has more zeroes than its degree and must be identically zero. We conclude that $p$ must be included in the output list.
To see this note that $R(X) := Q(X,g_0(X),\ldots,g_{s-1}(X))$ is a polynomial of degree less than $t$, while every $a$ such that $r(a) = (g_i(a))_{0 \leq i < s}$ is a zero of $R$. Thus $R$ has more zeroes than its degree and must be identically zero. We conclude that $p$ must be included in the output list.

Finally we turn to the algorithmic complexity of Step 2. To this end we note that the simple structure of $Q$ makes the search for $p$ a linear system --- the coefficients of $p$ are the $k$ unknowns and the linear constraints can be derived by computing the expressing the coefficient of $X^i$ in $Q(X,g_0(X),\ldots,g_{s-1}(X))$ as a linear function in the coefficients of $p$ and requiring that this coefficient be $0$. This system is no longer homogeneous, but nevertheless it can be solved in time $O(n^3 q^d)$, where $d$ is the dimension of the space of solutions. 

So to conclude the proof sketch, we only need to upper bound the dimension of the space of solutions to the linear system given by $R(X):= Q(X,g_0(X),\ldots,g_{s-1}(X)) = 0$ (under the restriction that $g_i(X) = p(\omega^i X)$). This step turns out to be a bit tedious and so we won't go into the gory details. But roughly note that this is a ``triangular'' linear system in that we can solve for $p_i(X) := p(X) \pmod X^{i+1}$ by requiring  $R(X) = 0 \pmod X^{i+1}$ for $i=0$ to $k-1$ iteratively. Each $p_{i}$ extends some solution for $p_{i-1}$ by determining the coefficient, say $c_{i}$, of $X^{i}$ in $p_{i}(X)$. The key here is that the condition $R(X) =0 \pmod{X^{i}}$ gives one linear constraint on $c_{i}$ of the form $T(\omega^i) c_{i} = \beta_{i}$ where $\beta_{i+1}$ is determined by the coefficients of $p_{i-1}(X)$ and $Q$, and $T$ is a polynomial of degree at most $s$, determined only by the coefficients of $Q$. When $T(\omega^i)$ is non-zero $p_i$ is uniquely determined from $p_{i-1}$, and when $T(\omega^i)=0$, $c_i$ is not linearly dependent on $p_{i-1}$ and adds to the dimension of the solution space. Since $T$ is a non-zero polynomial of degree at most $s$, we get that $d$ is upper bounded by $s$, thus leading to the claimed bound on the list-size and running time. 
\end{proof}

\paragraph{Rate-improving reduction from non-overlapping FRS to overlapping FRS:}

Our final step is a reduction between the non-overlapping and overlapping cases of the FRS that will improve the rate of the code. More elaborately, given an input to the non-overlapping FRS list-decoding problem for some code of rate $R$ and folding parameter $s$, we will transform it into an input to an overlapping FRS list-decoding problem of much lower rate $R'$ with folding parameter $s'$ while approximately preserving the fraction of errors! Since lower rate codes can typically be decoded from higher fraction of errors, this is a step in a good direction and as we will see, this is exactly what we need to get capacity achieving codes with efficient encoders and decoders.

The idea for the transformation is extremely simple: Given a bundle of size $s$ (of evaluations of some polynomial), we will simply split the bundle into $s-s'+1$ overlapping bundles of size $s'$. Specifically consider the following map, which we call the dilution map, $\phi= \phi^{(s,s')}:\F_q^s\to (\F_q^{s'})^{s-s'+1}$ given by $\phi(b_1,\ldots,b_s) = ((b_i,\ldots,b_{i+s'-1})_{i = 1}^{s-s'+1}$. Now let $\phi_n:(\F_q^s)^n \to (\F_q^{s'})^{n(s-s'+1)}$ be the map that applies $\phi$ to every coordinate of its argument and concatenates the output vectors. Finally let $\phi_n(A) = \{\phi_n(a) | a \in A\}$ for $A \subseteq (\F_q^s)^n$. Then we claim that $\phi_n$ maps a non-overlapping FRS code to an overlapping FRS code with some rate loss. Furthermore we claim that the map preserves relative distances.

\begin{proposition}[Properties of the dilution map]\label{prop:dilute}~
\begin{enumerate}
    \item For every $s'\leq s$, $k$, $\F_q$,$\omega$ and $S \subseteq \F_q$ with $|S|=n$ such that the FRS code $C \subseteq (\F_q^s)^n$ with parameters $s,k,\F_q,\omega,S$ is a non-overlapping FRS code, we have that $C' = \phi_n(C)$ is an overlapping FRS code with parameters $s',k,\F_q,\omega,S'$ where $S' = \{\omega^j \alpha | \alpha \in S, 0 \leq j \leq s-s'\}$. (Note that $n':= |S'| = n(s-s'+1)$.) In particular, if the rate of $C$ is $R$ then the rate of $C'$ is $R' = \frac1{s'}\frac{s}{s-s'+1}R$. 
    \item For every pair of words $x,y \in \Sigma^n$ where $\Sigma = \F_q^s$ we have $\delta(\phi_n(x),\phi_n(y)) \leq \delta(x,y)$. 
\end{enumerate}
\end{proposition}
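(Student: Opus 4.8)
The plan is to treat the two parts independently. Part~1 amounts to carefully tracking what $\phi_n$ does to each coordinate of an FRS codeword; Part~2 is a short counting argument.

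For Part~1, recall that the symbol of $E^{\mathrm{FRS}}_{s,k,\F_q,\omega,S}(m)$ at a point $\alpha\in S$ is the tuple $\bigl(p(\alpha),p(\omega\alpha),\ldots,p(\omega^{s-1}\alpha)\bigr)$, where $p(X)=\sum_j m_jX^j$. Applying $\phi=\phi^{(s,s')}$ to this tuple produces, as its $i$-th block for $1\le i\le s-s'+1$, the length-$s'$ window $\bigl(p(\omega^{i-1}\alpha),p(\omega^{i}\alpha),\ldots,p(\omega^{i+s'-2}\alpha)\bigr)$. The crucial observation is that, writing $\beta:=\omega^{i-1}\alpha$, this window equals $\bigl(p(\beta),p(\omega\beta),\ldots,p(\omega^{s'-1}\beta)\bigr)$, which is precisely the symbol of $E^{\mathrm{FRS}}_{s',k,\F_q,\omega,S'}(m)$ at the point $\beta$ --- the \emph{same} message $m$, now $s'$-folded and evaluated at a shifted point. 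Hence $\phi_n$ maps $E^{\mathrm{FRS}}_{s,k,\F_q,\omega,S}(m)$ to a word whose coordinates, indexed by pairs $(\alpha,i)$, are exactly the $s'$-folded RS symbols of $m$ at the points $\omega^{i-1}\alpha$; re-indexing coordinates by these points turns this word into $E^{\mathrm{FRS}}_{s',k,\F_q,\omega,S'}(m)$ verbatim. Letting $m$ range over $\F_q^k$ gives $C'=\phi_n(C)=E^{\mathrm{FRS}}_{s',k,\F_q,\omega,S'}(\F_q^k)$.

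The only subtlety in Part~1 is checking that $S'=\{\omega^j\alpha:\alpha\in S,\ 0\le j\le s-s'\}$ really has $n(s-s'+1)$ distinct elements, i.e.\ that $(\alpha,j)\mapsto\omega^j\alpha$ is injective on $S\times\{0,\ldots,s-s'\}$. This is exactly where the non-overlapping hypothesis enters: the images for distinct $\alpha$ lie in the pairwise-disjoint sets $T_\alpha$ (since $0\le j\le s-1$), while for a fixed $\alpha$ the points $\omega^0\alpha,\ldots,\omega^{s-s'}\alpha$ are distinct because $\omega$ is primitive of order $q-1\ge s$ (for a non-degenerate non-overlapping code). The rate claim is then mechanical: $\phi$ is injective --- the blocks $(b_1,\ldots,b_{s'}),(b_2,\ldots,b_{s'+1}),\ldots,(b_{s-s'+1},\ldots,b_s)$ jointly recover every $b_\ell$ --- so $\phi_n$ is injective and $|C'|=|C|$; with $n'=|S'|=n(s-s'+1)$ and $\log_{q^s}|C|=Rn$ one computes $R(C')=\frac{\log_{q^{s'}}|C'|}{n'}=\frac{(s/s')\,Rn}{n(s-s'+1)}=\frac1{s'}\cdot\frac{s}{s-s'+1}\cdot R$.

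For Part~2, I would argue per coordinate $\alpha$ of $x$ and $y$ (indexed by $S$). If $x_\alpha=y_\alpha$, then all $s-s'+1$ blocks of $\phi(x_\alpha)$ and $\phi(y_\alpha)$ coincide, contributing no disagreements between $\phi_n(x)$ and $\phi_n(y)$; if $x_\alpha\ne y_\alpha$, then at most all $s-s'+1$ of these blocks disagree. Summing over $\alpha$, the unnormalized Hamming distance of $\phi_n(x)$ and $\phi_n(y)$ is at most $(s-s'+1)$ times that of $x$ and $y$; dividing by $n'=n(s-s'+1)$ and $n$ respectively yields $\delta(\phi_n(x),\phi_n(y))\le\delta(x,y)$. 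I do not expect a genuine obstacle in either part; the point demanding the most attention is the bookkeeping in Part~1 --- identifying the window $\bigl(p(\omega^{i-1}\alpha),\ldots\bigr)$ with an $s'$-folded RS symbol at a relabeled evaluation point, and confirming $|S'|=n(s-s'+1)$ from the non-overlapping property.
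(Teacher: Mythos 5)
Your proof is correct. The paper actually omits any proof of this proposition, remarking only that it ``follows immediately from the definitions (of FRS codes and the dilution map)''; your argument fleshes out precisely that computation --- matching each window of a diluted bundle with the $s'$-folded symbol at the shifted point $\omega^{i-1}\alpha$, using the non-overlapping hypothesis to see that the re-indexing $(\alpha,i)\mapsto\omega^{i-1}\alpha$ is a bijection onto $S'$, and then doing the rate bookkeeping and the per-coordinate distance bound --- so it is exactly the intended ``immediate'' argument, just written out.
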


The proposition follows immediately from the definitions (of FRS codes and the dilution map) and we omit it here. To see how the proposition can be useful, we note that in order to decode a received word $r$ to a nearby codeword in $C$, it suffices to decode $\phi_n(r)$ to a nearby codeword in $C'$. Furthermore by Part (2) the fraction of errors is also preserved, i.e., to list-decode from $p$-fraction errors in $C$ it suffices to list-decode from $p$-fraction of errors in $C'$. Finally to see why decoding $C'$ may be easier, consider the setting where $s'$ is a large constant (independent of $n$) and $s$ is a much larger constant. For such a choice $s/(s-s'+1) \to 1$ and roughly we get that $R' \approx R/(s')$. So $C'$ has much smaller rate than $C$ and hopefully codes of smaller rate can be decoded from larger fractions of error. Indeed this turns out to be sufficient as formalized below.

\begin{theorem}\label{thm:frs}
    For every $\epsilon > 0$ and $R \in [0,1]$ there exists an $s$ and infinitely many $\F_q$ with primitive element $\omega\in\F_q$, $k$, and $S\subseteq \F_q$ with $|S|=n = \Omega_s(q)$ such that the FRS code $C$ with parameters $(s,q,\F_q,\omega,S)$ has rate $R$ and is decodable from $1-R-\epsilon$ fraction errors in polynomial time.
\end{theorem}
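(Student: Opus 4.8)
The plan is to combine \Cref{lem:overlap-FRS-decoder} (the overlapping FRS decoder) with \Cref{prop:dilute} (the dilution map) and then choose parameters carefully. The overall strategy has three steps: (i) fix a target rate $R$ and error fraction $p = 1-R-\epsilon$; (ii) pick a non-overlapping FRS code $C$ of rate $R$ with a large folding parameter $s$; (iii) reduce the decoding of $C$ to decoding its dilution $C' = \phi_n(C)$, which is an overlapping FRS code of much smaller rate $R'$, and verify that the bound $t' = \frac{n'-k}{s'+1} + k$ from \Cref{lem:overlap-FRS-decoder} translates into a tolerable error fraction on $C'$, hence on $C$.

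\textbf{Setting the parameters.} Write $s' $ for the overlapping folding parameter and $s$ for the (larger) non-overlapping one, with $s' \mid$ nothing in particular but $s \gg s'$. By \Cref{prop:dilute}, if $C$ has rate $R$ then $C'$ has rate $R' = \frac{1}{s'}\cdot\frac{s}{s-s'+1}\, R$, and $C'$ has blocklength $n' = n(s-s'+1)$ over $\F_q$ (unfolded blocklength $n's' = n s'(s-s'+1)$). By \Cref{lem:overlap-FRS-decoder} the overlapping decoder for $C'$ succeeds whenever the number of \emph{agreements} (in the $\F_q^{s'}$-symbol sense) is at least $t' = \frac{n'-k'}{s'+1} + k'$, where $k' = k$ is the same message dimension (the dilution map does not change the underlying polynomial or its degree bound). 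So the fraction of agreements needed is $t'/n' = 1 - \frac{s'}{s'+1}\bigl(1 - \frac{k}{n'}\bigr)$. Now $\frac{k}{n'} = \frac{k}{n(s-s'+1)}$; since $\frac{k}{ns} = R$ we get $\frac{k}{n'} = R\cdot\frac{s}{s-s'+1}$, which tends to $R$ as $s\to\infty$ with $s'$ fixed. Hence $t'/n' \to 1 - \frac{s'}{s'+1}(1-R)$, and the tolerable \emph{error} fraction on $C'$ (equivalently, by \Cref{prop:dilute}(2), on $C$) is at least $\frac{s'}{s'+1}(1-R) - o_s(1)$. Choosing $s'$ large enough that $\frac{s'}{s'+1} > 1 - \frac{\epsilon}{2(1-R)}$ and then $s$ large enough that the $o_s(1)$ slack is below $\epsilon/2$ yields error fraction $> 1-R-\epsilon$, as desired.

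\textbf{Efficiency and list size.} The decoder for $C$ on input $r$ first applies $\phi_n$ (clearly polynomial time), then runs the overlapping FRS decoder on $\phi_n(r)$ for the code $C'$ with folding parameter $s'$. By \Cref{lem:overlap-FRS-decoder} this runs in time $O((n')^3 q^{s'}) = \mathrm{poly}(n)\cdot q^{O(1)}$ and outputs a list of size $O(q^{s'}) = q^{O(1)}$, since $s'$ is a constant depending only on $\epsilon$ and $R$. One must check that the list of polynomials returned for $C'$ includes every polynomial $p$ of degree $<k$ whose $C$-encoding agrees with $r$ on a $(1-p)$-fraction of the $n$ coordinates: this is exactly the content of \Cref{prop:dilute} — such a $p$ has its $C'$-encoding agreeing with $\phi_n(r)$ on at least a $(1-p)$-fraction of the $n'$ coordinates (distances only shrink under $\phi_n$), which exceeds $t'/n'$ by the parameter choice, so the overlapping decoder captures it. Finally, the encoder for $C$ is a Reed-Solomon-type evaluation map and is trivially polynomial time.

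\textbf{Main obstacle.} The delicate point is the bookkeeping that connects ``$p$-fraction errors on the $\F_q^s$-folded code $C$'' to ``$p$-fraction agreements in the $\F_q^{s'}$-symbol sense on $C'$''. \Cref{prop:dilute}(2) gives $\delta(\phi_n(x),\phi_n(y)) \le \delta(x,y)$, i.e. the fraction of \emph{differing} coordinates does not increase, so the fraction of \emph{agreeing} coordinates does not decrease — this is what is needed, and it goes the right way. The other subtlety is that the agreement threshold $t'$ in \Cref{lem:overlap-FRS-decoder} is stated for the overlapping code's own blocklength $n'$, and one must verify the algebraic identity $t'/n' = 1 - \frac{s'}{s'+1}(1 - k/n')$ and then that $k/n' \to R$ under the chosen parameter regime ($s' = O_\epsilon(1)$ fixed, $s\to\infty$, with $q = \Theta_s(n)$ so that the ``$n\le q$'' constraint of FRS is met — this forces $n = \Omega_s(q)$, consistent with the theorem statement). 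Handling the rate loss correctly — making sure $R'$ is small enough for the overlapping decoder to be useful while the agreement fraction on $C$ stays above $R + \epsilon$ — is where all the care goes, but no new idea beyond \Cref{lem:overlap-FRS-decoder} and \Cref{prop:dilute} is required.
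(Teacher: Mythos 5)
Your proof is correct and takes essentially the same route as the paper: use \Cref{prop:dilute} to reduce decoding the non-overlapping FRS code $C$ to decoding its dilution $C'$, and then invoke \Cref{lem:overlap-FRS-decoder}, with the only real work being the parameter bookkeeping. The paper makes the same argument but with explicit constants ($s' = \lceil 1/\epsilon\rceil$, $s = (s'+1)(s'-1)$, $q-1 \geq sn$, $S = \{\omega^0,\omega^s,\ldots,\omega^{(n-1)s}\}$) rather than your limiting/$o_s(1)$ formulation; both are valid.
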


\begin{proof}
    Given $\epsilon$ we pick $s' = \lceil 1/\epsilon \rceil$ and $s = (s'+1)(s'-1)$. Given $n$, we pick $q$ to be a prime power satisfying $q-1 \geq sn$. Let $\omega$ be a primitive element in $\F_q$ and let $S = \{\omega^0,\omega^s,\omega^{2s},\ldots,\omega^{(n-1)s}\}$. Finally let $k = \lceil Rsn\rceil$. Let $C$ be the (non-overlapping) FRS code with parameters $(s,k,\F_q,\omega,S)$ and let $C' = \phi_n(C)$. Clearly $C$ has rate at least $R$. We claim that $C$ is list-decodable from $1 - R - \epsilon$ fraction errors in polynomial time in $n$, assuming field arithmetic over $\F_q$ is unit cost. 
    
    To prove the claim we need to list-decode $C$ from $1-R-\epsilon$ fraction of errors. By \Cref{prop:dilute} it suffices to be able to list-decode $C'$ from $1-R-\epsilon$ fraction of errors. We claim that  the {\bf Overlapping FRS List-Decoding Algorithm} achieves this and that this is already implied by \Cref{lem:overlap-FRS-decoder} for the given setting of parameters.
    
    Since $R = k/(sn)$ and $n' = n(s-s'+1)$ we need to list-decode $C'$ from $(1-R-\epsilon)n'$ errors, or $(R+\epsilon)n' = (k/(sn)+\epsilon)n(s-s'+1) = k(s-s'+1)/s + \epsilon n(s-s'+1)$ agreements in $C'$. \Cref{lem:overlap-FRS-decoder} asserts that $C'$ can be decoded from $k + \frac{n'-k}{s'+1}$ agreements. Thus for this algorithm to work for us, it suffices to verify that $k(s-s'+1)/s + \epsilon n(s-s'+1) \geq k + \frac{n'-k}{s'+1}$. To this end we note that we have $k/(s'+1) \geq k(s'-1)/s$ (since $s  \geq (s'+1)(s'-1)$) and $\epsilon n(s-s'+1) \geq n'/(s'+1) = n(s-s'+1)/(s'+1)$ (since $s' \geq 1/\epsilon$.) This yields the desired inequality and concludes the proof.    
\end{proof}

\section{Subsequent Developments and Open Questions}

The results described in the previous section and captured by \Cref{thm:frs} are really nothing short of magical. They say that if you have $k$ pieces of information, they can be encoded so that recovering a ``negligible" amount (formally $o(n)$) of extra pieces correctly in addition to the necessary bare minimum ($k$) suffices to narrow down the original message to a small list, even with arbitrary injection of errors to the remaining $n-k-o(n)$ transmissions (and also allowing the adversary to choose which subset of transmissions to corrupt). 
However once this incredible milestone is achieved, one can ask for even more improvements. We consider some of the natural follow-up questions and describe some of the work in the past two decades that addresses some of the questions (and leave an important one open).

\paragraph{Alphabet size} Reed-Solomon codes were already weak in one sense --- namely that the alphabet size they work with grows at least linearly with the length of the code (i.e., $q \geq n$). Folded Reed-Solomon codes are even weaker in that the alphabet size grows polynomially with the length of the code and in fact, if one wishes to correct from $k + \epsilon n$ agreement, the alphabet size grows as $q \geq n^{1/\epsilon^2}$. A reader may even wonder if these codes are abusing the alphabet size to get to their goal.

It turns out that coding theory already provides good answers to alleviate this concern. Already in 1966 Forney~\cite{forney-thesis} introduced an operation called ``concatenation of codes'' that allowed the use of ``good code'' (of positive rate and distance) over large alphabets to get good codes over small alphabets while preserving polynomial time encoding and decoding. However concatenation in its basic forms suffers a loss in rate and as a result it does not preserve achievement of capacity. Work in the interim, notably the works of Alon, Edmonds and Luby~\cite{AlonEL} and Guruswami and Indyk~\cite{GuruswamiI01} have introduced more sophisticated, graph-theoretic, concatenation methods. Using these methods, the original work of Guruwami and Rudra already prove the following ``constant-alphabet'' version of their capacity achieving results:

\begin{theorem}[\cite{GuRu}]\label{thm:const-alph-cap}
    For every $\epsilon > 0$ and $R \in [0,1]$ there exists an $s$ and alphabet $\Sigma$ (with $|\Sigma| = q = 2^{O(\epsilon^{-2})}$) and polynomial $L$ such that for infinitely many $n$ there exists a code over alphabet $\Sigma$ of length $n$, rate $R$ that is $(1-R-\epsilon,L(n))$-efficiently list decodable.
\end{theorem}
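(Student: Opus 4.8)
The plan is to build the constant-alphabet code by \emph{concatenating} the large-alphabet, capacity-achieving codes of \Cref{thm:frs} with a small inner code, but using an expander-based ``rate-preserving'' concatenation in the style of Alon--Edmonds--Luby~\cite{AlonEL} and Guruswami--Indyk~\cite{GuruswamiI01} rather than plain Forney concatenation. Plain concatenation of an outer code of rate $R_1$ with an inner code of rate $R_2$ produces rate $R_1R_2$, and since a fixed-alphabet inner code that corrects a constant fraction of errors cannot have rate arbitrarily close to $1$, this multiplicative loss destroys capacity; the expander-based variant instead costs only a $(1-\eta)$-factor in the rate while essentially preserving the list-decoding radius.

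The first ingredient is a \emph{list-recovery} strengthening of \Cref{thm:frs}: for every $R_1\in[0,1]$, every $\epsilon_1>0$ and every constant $\ell$, the rate-$R_1$ FRS code admits a polynomial-time algorithm that, given input lists $\{S_a\}_{a\in S}$ with each $|S_a|\le\ell$, outputs a list of size $\mathrm{poly}_{\ell,s}(n)$ containing every degree-$<k$ polynomial whose folded evaluation at $a$ lies in $S_a$ for at least $(R_1+\epsilon_1)n$ values of $a$. This follows from the interpolation-plus-root-finding paradigm behind the \textbf{Overlapping FRS List-Decoding Algorithm}: one now asks the interpolated $Q$ to vanish on every point of every list (optionally with multiplicity), which enlarges the interpolation degree and hence costs an extra $\epsilon_1$ in the agreement parameter, while the ``triangular'' argument of Step~2 bounding the dimension of the space of output polynomials goes through with $s$ replaced by a bound of order $\ell s$. (This is exactly the list-recovery guarantee established in~\cite{GuRu}.)

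The second ingredient is the inner gadget and the concatenation itself. Fix a small $\eta>0$ to be tuned, take an explicit $d$-regular bipartite expander on vertex sets of size $n$ and $n$ with $d=O(\eta^{-2})$ and normalized second eigenvalue at most $\eta$, fix an alphabet $\Sigma_0$ of size $O(1)$, and choose --- by exhaustive search over constant-size codes, using the probabilistic existence statement in \Cref{thm:comb-list-dec-cap} --- an inner code over $\Sigma_0$ of rate at least $1-\eta$ that is list-decodable from a constant fraction $\gamma=\gamma(\eta)>0$ of errors with list size at most $\ell$; since this code has constant block length, both finding it and list-decoding it cost constant time. The concatenated codeword is formed by encoding the message with the outer FRS code, placing its $n$ symbols on the $n$ left vertices, letting each right vertex collect the $d$ outer symbols in its neighbourhood and encode that $d$-tuple with the inner code, and taking the $i$-th final symbol to be the $i$-th inner block; thus the final alphabet is $\Sigma_0^{O(\eta^{-2})}$, of size $2^{O(\eta^{-2})}$, and the final rate is $R_1(1-\eta)$. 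To decode from a $\rho$-fraction of errors, list-decode each inner block to a candidate list of size at most $\ell$, use the expander mixing lemma to bound by $\rho+O(\eta)$ the fraction of left vertices whose candidate list misses the true outer symbol, and run the FRS list-recovery algorithm above; this succeeds whenever $\rho\le 1-R_1-\epsilon_1-O(\eta)$. Given target $R<1$ and $\epsilon>0$, set $\epsilon_1,\eta=\Theta(\epsilon)$ and $R_1=R/(1-\eta)$, so that the final code has rate at least $R$, alphabet size $2^{O(\epsilon^{-2})}$, and is efficiently list-decodable from a $1-R-O(\epsilon)$ fraction of errors; rescaling $\epsilon$ finishes the proof.

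The main obstacle is the expander-concatenation analysis: showing that redistributing the outer symbols through the expander lets a rate-$(1-\eta)$ inner code --- which in isolation corrects only an $O(\eta)$ fraction of errors --- combine with the outer code into a single code whose list-decoding radius is still nearly $1-R_1$. This is precisely the graph-theoretic concatenation of~\cite{AlonEL,GuruswamiI01}, and the two ideas that make it work are the expander mixing lemma (so a $\rho$-fraction of symbol errors corrupts only a $\rho+O(\eta)$-fraction of outer positions, rather than the $\rho/\gamma$-fraction one would get from a naive union bound) and the use of list-\emph{recovery} rather than list-decoding of the outer code, since most outer positions are only narrowed to a list of candidates. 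A secondary technical point, needed to justify that list-recovery hypothesis, is controlling the output-list size of FRS list recovery through the triangular root-finding argument.
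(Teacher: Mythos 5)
The paper does not give a proof of this theorem: it is stated as a citation to Guruswami--Rudra~\cite{GuRu}, preceded by a one-paragraph sketch naming the key ingredients --- Forney concatenation plus the expander-based, rate-preserving variants of Alon--Edmonds--Luby~\cite{AlonEL} and Guruswami--Indyk~\cite{GuruswamiI01}. Your proposal fleshes out exactly that sketch: FRS list \emph{recovery} as the outer decoding primitive, a constant-size near-rate-$1$ inner code found by brute force, an expander to redistribute symbols so that a $\rho$-fraction of channel errors only corrupts a $\rho+O(\eta)$-fraction of outer positions, and a final parameter balance giving alphabet size $2^{O(\epsilon^{-2})}$. This is the same route as~\cite{GuRu}, and your identification of the two load-bearing ideas (expander mixing instead of a union bound, and list-recovery instead of list-decoding on the outer side) is accurate.

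One small caution worth flagging: \Cref{thm:frs} as stated in the paper is a list-\emph{decoding} statement, not a list-\emph{recovery} one, so your first ingredient is genuinely an extension and not an invocation of an existing theorem of the paper. Your sketch of how to get it (interpolate through all list elements, pay $\epsilon_1$ in agreement, rerun the triangular root-finding with the dimension bound scaled by $\ell$) is the right idea and is what~\cite{GuRu} prove, but you should be explicit that this requires re-deriving the analogue of \Cref{lem:overlap-FRS-decoder} in the list-recovery setting rather than citing the paper's lemma directly. With that noted, the proposal is a faithful reconstruction of the intended argument.
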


We remark that is also possible to use algebraic geometry to get some capacity achieving codes over constant sized alphabets as shown by Guruswami and Xing~\cite{GuruX22}.

\paragraph{Running time and List-size bound} Next we turn to the running time of the decoding algorithm. As presented (and once all parameters are set), the running time needed to get $\epsilon$-close to capacity (i.e., to decode from $1-R-\epsilon$ fraction errors) is $n^{O(1/\epsilon)}$. It would be desirable to improve this run time to something more like $c_\epsilon n^{c_0}$ where $c_0$ is independent of $\epsilon$ --- i.e., to remove the dependence on $\epsilon$ from the exponent of the running time to just the leading constant. But a major bottleneck towards achieving this result was that the known upper bounds on the list size when decoding from $1-R-\epsilon$ fraction errors have $\epsilon$ in the exponent. Understanding this list size better led to a series of very illuminating results, starting with the work of Kopparty, Ron-Zewi, Saraf and Wootters~\cite{KoppartyRSW}, continuing with Tamo~\cite{Tamo24:FRS} and culminating in the works of Srivastava~\cite{Srivastava} and Chen and Zhang~\cite{ChenZ} that show that Folded Reed-Solomon codes can be designed so as to be $(1-R-\epsilon,1/\epsilon)$-list-decodable. 

These advances return the focus to the running time of the efficient algorithms. While an improvement to the list-size does not necessarily imply an improvement to the running time of the decoding algorithm, the results of \cite{KoppartyRSW,Tamo24:FRS} ended up yielding improvements to the running time also, in particular giving polynomial time algorithms whose running time has a fixed exponent, independent of $\epsilon$, to $n$. A more recent result, by Goyal, Harsha, Kumar, Shankar~\cite{GoyalHKS} ended up giving an essentially optimal result for decoding from $(1-R-\epsilon)$ fraction errors in time $O(n \cdot (\log n)^{c/\epsilon^c})$ for some absolute constant $c$. (Up to the logarithmic factors in $n$ these results are obviously optimal. Some logarithmic factor loss seems inevitable given that we don't even know how to evaluate polynomials of degree $n$ at $n$ places without some logarithmic factor loss in the run time. One can imagine improvements that do not involve $1/\epsilon$ figuring in the exponent of the logarithm though.)

%\paragraph{Running time} 

\paragraph{Other capacity achieving codes.} Since the original work~\cite{GuRu} showing that Folded-Reed Solomon codes are capacity-achieving, a few other codes have been discovered with this feature --- all of them being algebraic, or using one of the capacity achieving codes as an ingredient. Perhaps the simplest of these to describe are the derivative codes (also sometimes known as ``univariate multiplicity codes'') --- here the message is again a univariate polynomial and its encoding involves ``bundles" containing the evaluation of the message polynomial and several  ($s-1$) derivatives at various points in the field. Guruswami and Wang~\cite{GuruswamiW13} and Kopparty~\cite{Kopparty:mult} showed that these codes also achieve capacity. Other variants include shifted FRS codes already shown to achieve capacity in \cite{GuRu}, where the bundles contain evaluations of a polynomial over an arithmetic progression (as opposed to a geometric progression); and affine FRS codes shown to achieve capacity in a more recent work of Bhandari, Harsha, Kumar and Sudan~\cite{BhandariHKS}, which are codes that allow combinations of arithmetic and geometric series within a bundle.
One interesting aspect of the work~\cite{BhandariHKS} is that it captures the ``bundling'' operation in an algebraically natural way: They propose thinking of the encoding function as being given by a collection of polynomials $E_1(X),\ldots,E_n(X)$ and the $i$th coordinate of the encoding of a polynomial $P(X)$ is $P(X) \mod E_i(X)$. This generalizes all the codes listed above, and allows properties of the ideals $(E_1(X)), \ldots, (E_n(X))$ to work their way into the analysis of the list-decodability of the underlying codes.(Some readers may note some similarity with the use of the Chinese Remainder theorem here --- indeed this theorem and the abstract theory behind it do form the basis of much of this generalization.) However even the framework in \cite{BhandariHKS} turns out to be not the most general. In \cite{BermanST}, Berman, Shany and Tamo introduce a new family of codes based on bivariate polynomials as the message space that end up being capacity achievable with algorithms and analysis along the same lines as FRS codes, but do not fall in the framework of \cite{BhandariHKS}. This in turn has induced a further generalization of \cite{BhandariHKS} to ``Bivariate Linear Operator'' codes in the work of Putterman and Zaripov~\cite{PuttermanZ} that ends up being the most general known framework capturing all known basic capacity achieving codes. (We stress these do not capture the algebraic-geometry based codes yet.)

While algebraic codes do form the essence of all known capacity-achieving codes, we do have an interesting collection of codes that do not fall entirely in this framework of ``ideal-theoretic'' codes. Instead they use any capacity achieving code as an ingredient, but then enhance it with non-algebraic tools to get new features that the underlying algebraic code does not possess. Indeed the ``constant-alphabet'' capacity achieving codes of \Cref{thm:const-alph-cap} are already a good example of such an enhanced code. An elegant example of such an enhancement is in the work of Hemenway, Ron-Zewi and Wootters~\cite{HemenwayRW} who show that tensor-products of codes retain the capacity achieving feature. (The tensor product operation, due to Elias~\cite{Elias-tensor},  takes two codes $C_1 \subseteq \F_q^{n_1}$ and $C_2 \subseteq \F_q^{n_2}$ and produces a new code $C_1 \otimes C_2 \subseteq \F_q^{n_1 \times n_2}$ whose codewords can be viewed as $n_1 \times n_2$ matrices and consists of all matrices whose columns are codewords of $C_1$ and rows are codewords of $C_2$.) This general property, applied to appropriate codes leads to nice properties like fast running time of decoding algorithm and ``locality'' (a notion that we will not elaborate on here).

\paragraph{Open Question: Capacity achieving binary codes}

While FRS codes and variants do manage to achieve capacity over sufficiently large alphabets, when it comes to very small alphabets they do not do so, even with the best known concatenation techniques. The setting of the binary alphabet ($q=2$) in particular highlights the gap. Here it is well-known that if the goal is to correct from $p$-fraction errors for with polynomial sized lists, then the best known codes have rate approaching $1 - h(p)$ for $p \leq 1/2$ where $h(p) = - p\log_2 p - (1-p) \log_2 (1-p)$ is the binary entropy function. Thus the capacity of list-decoding over the binary alphabets is $1 - h(p)$. However no explicit codes approaching capacity are known and we don't know if there exist such codes with polynomial time encoding and decoding algorithms. 

While it would be desirable to get such codes for all choices of $p$ even the setting of $p \to 1/2$ illustrates the gap and does so qualitatively. If $p = \frac12 - \gamma$ and $\gamma \to 0$, then the capacity $1 - h(p)$ takes the form $\widetilde{\Theta}(\gamma^{2})$ (where the $\widetilde{\Theta}$ notation hides factors that grow logarithmically in its argument). The best known codes, also from \cite{GuRu}, achieve a rate of $\Omega(\gamma^4)$. Despite a lot of effort aimed at improving this rate over the past two decades the exponent has held steady. We hope this setting remains a focus of future research. 

\paragraph{An Addendum: New capacity-achieving codes}
During the editing stages of this article the author was informed about an exciting sequence of recent results that builds capacity-achieving codes that can be list-decoded in nearly linear time! These codes are constructed by completely different methods than those described in this paper. We refer to the reader to the papers~\cite{Tuls1,Tuls2} for further details. 

\section*{Acknowledgments} 

I would like to thank Kumar Murty for encouraging me to write this article. The article benefits from discussions over the years with my colleagues and collaborators and I'd like to thank
Siddharth Bhandari,
Oded Goldreich,
Venkatesan Guruswami,
Prahladh Harsha,
Swastik Kopparty, 
Mrinal Kumar, 
Atri Rudra,
Ramprasad Saptharishi, 
Shubhangi Saraf, 
Srikanth Srinivasan,
Salil Vadhan, and Avi Wigderson
for improving my understanding of the results covered here. 
Thanks to Prashanth Amireddy, Oded Goldreich, Cassandra Marcussen and Aaron (Louie) Putterman for proofreading this article and catching many errors.
Thanks to an anonymous reviewer for their careful review that caught several errors and point out inconsistencies.

%\bibliographystyle{alpha}

%\bibliography{coding}

\end{document}